\documentclass[a4paper,superscriptaddress,amsmath,amssymb,aps, twocolumn]{revtex4-2}
\usepackage{amsfonts,amsmath,amssymb,amsthm}
\usepackage{braket}
\usepackage{physics}
\usepackage{bm}
\usepackage{graphicx}
\usepackage{color}
\usepackage[colorlinks=true]{hyperref}
\usepackage{algorithm}
\usepackage{algpseudocodex}
\usepackage{quantikz}
\usepackage{enumitem}
\usepackage{booktabs, multirow}
\usepackage{natbib}

\hypersetup{
           breaklinks=true,   
           colorlinks=true,   
           pdfusetitle=true, 
           citecolor=blue,
           urlcolor=blue
           }

\theoremstyle{plain}
\newtheorem{thm}{Theorem}
\newtheorem{proposition}[thm]{Proposition}
\newtheorem{lemma}[thm]{Lemma}

\theoremstyle{definition}

\theoremstyle{remark}

\algdef{SE}[DOWHILE]{Do}{doWhile}{\algorithmicdo}[1]{\algorithmicwhile\ #1}

\begin{document}

\title{
Adaptive identification of low-degree polynomials in quantum singular value transformation: application to nonlinear quantum properties estimation
}

\author{Jumpei Kato}
\thanks{jumpei\_kato[at]keio.jp}
\affiliation{Mitsubishi UFJ Financial Group, Inc.~and MUFG Bank, Ltd., 4-10-2 Nakano, Nakano-ku, Tokyo 164-0001, Japan}
\affiliation{Quantum Computing Center, Keio University, Hiyoshi 3-14-1, Kohoku, Yokohama, Kanagawa 223-8522, Japan}
\affiliation{Graduate School of Science and Technology, Keio University, 3-14-1 Hiyoshi, Kohoku, Yokohama, Kanagawa 223-8522, Japan}

\author{Akira Tanji}
\affiliation{Graduate School of Science and Technology, Keio University, 3-14-1 Hiyoshi, Kohoku, Yokohama, Kanagawa 223-8522, Japan}

\author{Hiroyuki Harada}
\affiliation{Graduate School of Science and Technology, Keio University, 3-14-1 Hiyoshi, Kohoku, Yokohama, Kanagawa 223-8522, Japan}

\author{Kaito Wada}
\affiliation{International Center for Elementary Particle Physics, University of Tokyo, 7-3-1 Hongo, Bunkyo-ku, Tokyo 113-0033, Japan}
\affiliation{Graduate School of Science and Technology, Keio University, 3-14-1 Hiyoshi, Kohoku, Yokohama, Kanagawa 223-8522, Japan}

\author{Kosuke Ito}
\affiliation{Advanced Material Engineering Division, Toyota Motor Corporation, 1200 Mishuku, Susono, Shizuoka 410-1193, Japan}
\affiliation{Quantum Computing Center, Keio University, Hiyoshi 3-14-1, Kohoku, Yokohama, Kanagawa 223-8522, Japan}

\author{Naoki Yamamoto}
\thanks{yamamoto[at]appi.keio.ac.jp}
\affiliation{Quantum Computing Center, Keio University, Hiyoshi 3-14-1, Kohoku, Yokohama, Kanagawa 223-8522, Japan}
\affiliation{Department of Applied Physics and Physico-Informatics, Keio University, Hiyoshi 3-14-1, Kohoku, Yokohama, Kanagawa 223-8522, Japan}
\date{\today}
\begin{abstract}
    Estimating properties of unknown quantum states via quantum singular value transformation (QSVT) often requires high-degree polynomials to handle small eigenvalues of density matrices.
    Specifically, the existing approaches determine the polynomial degree by relying on overly conservative worst-case bounds based on the minimum non-zero eigenvalue or the rank of the density matrices.
    In this work, we propose a spectral cutoff method that truncates the negligible eigenvalue tail depending on the task, the target accuracy, and the state, which enables the use of significantly lower-degree polynomials. 
    To implement this, we develop a two-stage algorithm to estimate nonlinear properties, particularly von Neumann entropy and R{\'e}nyi entropy.
    In the first stage, we execute a search algorithm to identify the spectral cutoff directly from the unknown quantum state.
    In the second stage, we estimate the nonlinear properties utilizing QSVT with the degree of polynomial adaptively determined by the cutoff.
    This two-stage algorithm significantly improves the overall estimation cost compared to known bounds, even without knowing the minimum eigenvalue or the rank.
\end{abstract}
\maketitle

\section{Introduction}

\begin{figure}[t]
    \centering
    \includegraphics[width=1\columnwidth]{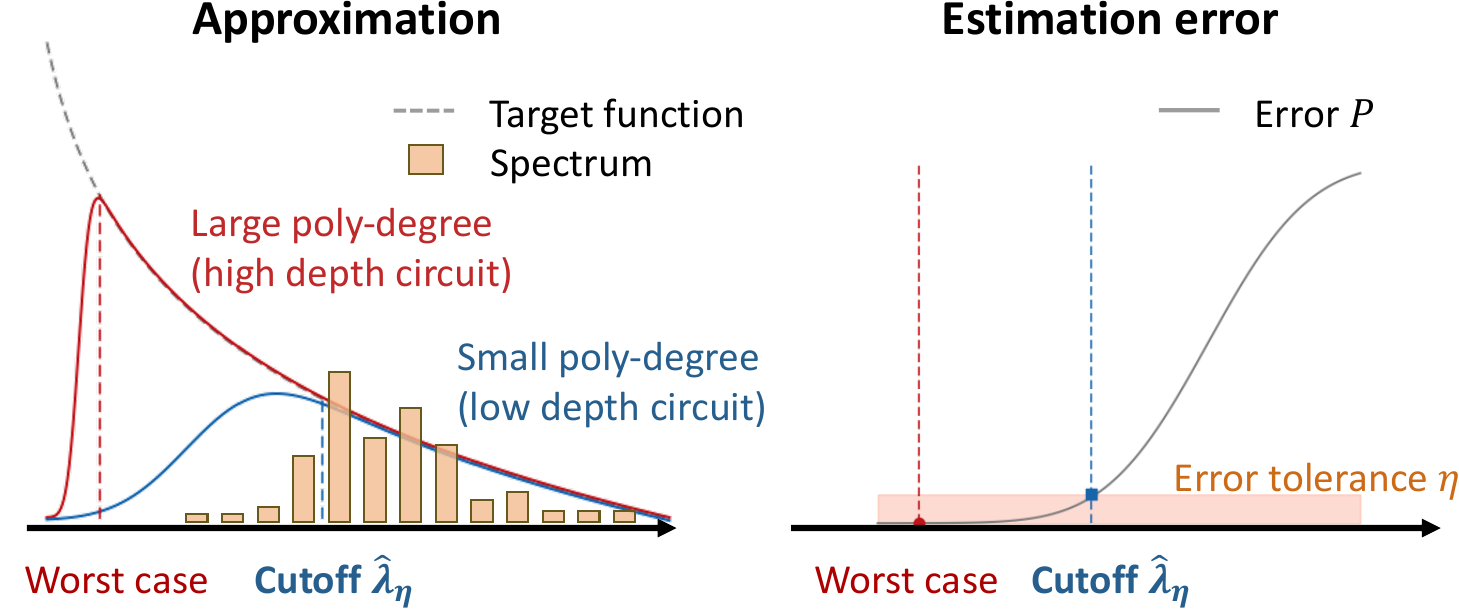}
    \caption{The illustration of the idea. 
    (Left) When using a polynomial to approximate a target function that is singular at the origin, the conventional approach truncates the polynomial degree based on a worst-case threshold below the minimum eigenvalue of the state.
    (Right) Our proposed algorithm adaptively identifies the cutoff $\hat{\lambda}_\eta$ with reference to the estimation error tolerance $\eta$ of the target nonlinear property, while the conventional one takes a property-independent worst-case threshold. 
    The cutoff $\hat{\lambda}_\eta$ yields a sufficiently low-degree polynomial in the left figure.}
    \label{fig:idea}
\end{figure}

\subsection{Background}\label{sec:background}
Extracting meaningful features from unknown quantum states is of primary interest in quantum science and technology~\cite{Bravyi2011-testing, montanaro2013-survey,anshu2024survey}.
In particular, nonlinear properties of quantum states, such as entropies and distance metrics, provide rich information about the underlying quantum systems.
Establishing an efficient method to estimate these properties directly from multiple identical copies of unknown states is a central challenge in the field.

The baseline approach to this problem has traditionally relied on quantum state tomography~\cite{haah2016sample,o2016efficient,pelecanos2025mixed} to fully reconstruct the underlying density matrix and then classically evaluate the target properties.
However, the tomographic approach often requires exponentially many state copies in the system size.
To avoid this prohibitive cost, significant progress has been made in developing quantum property estimation algorithms, such as shadow tomography~\cite{10.1145/3188745.3188802,aaronson2019gentle,huang2020predicting}, an empirical Young diagram approach~\cite{acharya2020-entropy},
variational approaches~\cite{Cerezo2020-variational, Tan2021-variational, goldfeld2024neural}, and multi-observable estimation under coherent access~\cite{PhysRevLett.129.240501,PRXQuantum.6.020308,4c6g-zx6c}.

A representative approach for estimating nonlinear state properties is to employ the quantum singular value transformation (QSVT)~\cite{gilyen2019-qsvt, low2019qubitization}.
The QSVT serves as an algorithmic framework to apply a wide range of polynomial transformations to the singular values (or eigenvalues for Hermitian matrices) of an input matrix. 
Many QSVT-based algorithms have been developed to estimate properties of quantum states such as entropy~\cite{gilyen2019-world, gur2021-sublinear, Li2019-query, Subramanian2021-renyi, wang2024-new, wang2024discrete, nghiem2025measuringancilla, wang2025entropy-samplizer}, trace distance and fidelity~\cite{wang2023fasttrace, gilyen2022-fidelity, Wang2023-fidelity, utsumi2025uulmann}, and more~\cite{gilyen2022petz, Liu2024-matrix-geometric-means}.
Some algorithms exhibit efficient complexity, avoiding the exponential scaling with the system size.

These QSVT-based algorithms are designed by approximating target functions with polynomials.
The degree of the polynomial approximation is a key parameter that determines the quality of the approximation and the computational cost. 
Since most target functions of interest such as $\log (x)$ and $x^{\alpha-1}$ are singular or non-analytic at $x=0$, approximating them over a region containing very small eigenvalues requires a high polynomial degree. 
Accordingly, the computational cost of those QSVT-based algorithms depends on the tail contribution of the spectrum.
However, for an unknown quantum state, such spectral information is generally not available beforehand. 
Existing analyses therefore often rely on coarse prior information, such as the minimum non-zero eigenvalue $\lambda_{\min}$ or the rank $r$ of a density matrix. 
While the polynomial degree based on such worst-case parameters safely controls the approximation error, it may be overly conservative; see Fig.~\ref{fig:idea}~(left).

In practical problems, however, it is almost hopeless to obtain a good knowledge of $\lambda_{\min}$ or $r$ in advance. 
Without such prior knowledge, we need to take a very conservative lower bound of $\lambda_{\min}$ near $x=0$, which could greatly increase the degree. 
We may also assume a full-rank state, resulting in the system-dimensional (i.e., exponential) cost in QSVT algorithms similar to the tomographic approach. 
In addition, the parameters $\lambda_{\min}$ and $r$ are highly sensitive to noise; even small-intensity noise instantly converts a low-rank state into a full-rank state, driving the required complexity up to the system dimension.
Therefore, a smoothed input parameter is highly desirable.

\subsection{Summary of the work}
This work provides a framework to identify low-degree polynomials sufficient for estimating nonlinear properties of a given unknown state $\rho$ and its copies by finding an appropriate spectral cutoff. 
A rough overview is illustrated in Fig.~\ref{fig:idea}~(right). 
The cutoff $\hat{\lambda}_\eta$ truncates the negligible eigenvalues of $\rho$ within an estimation error tolerance $\eta$. 
Since this cutoff is identified by the spectrum of $\rho$ rather than the worst-case parameters characterized by $\lambda_{\min}$ or $r$,
the polynomial degree determined by $\hat{\lambda}_\eta$ can be significantly lower than the worst-case bound based on $\lambda_{\min}$ or $r$.
Furthermore, the cutoff is expected to be highly robust to noise. 
In particular, for a noisy low-rank state, the algorithm can avoid the above-mentioned system-dimensionality cost caused by the noise.

To implement this idea in practice, we propose an adaptive two-stage algorithm to efficiently estimate properties of quantum states. 
In Stage~1 (identification), our search algorithm identifies the spectral cutoff $\hat{\lambda}_\eta$ in an iterative way using the spectrum of $\rho$. 
In Stage~2 (estimation), the algorithm executes the property estimation via QSVT using a polynomial degree determined by the identified cutoff.

Note that the two-stage strategy is a common design principle in quantum estimation and related algorithms. 
For example, in quantum parameter estimation~\cite{Barndorff-2000}, a preliminary stage can be used to obtain a coarse estimate of an unknown parameter, which is then used to design a near-optimal measurement in the second stage.

We apply the two-stage algorithm to estimate von Neumann entropy and R{\'e}nyi entropy. 
Focusing on this specific target property reveals the necessary number of unknown copies, governed by the identified spectral cutoff, which depends on the target task, quantum state, and target accuracy.
Since the cutoff can be much larger than the worst-case bound, our algorithm reduces the polynomial degree and thereby the copy requirements.

\subsection{Relation to prior works}

Our main application is estimating von Neumann entropy in the sample-access model, where the input is given as independent copies of an unknown quantum state $\rho$.
Several recent works have studied this problem.
While some algorithms pay the system-dimensional cost~\cite{acharya2020-entropy, wang2025entropy-samplizer}, other works assume the prior knowledge of rank $r$ or the minimum non-zero eigenvalue $\lambda_{\min}$ to avoid it.
Ref.~\cite{wang2023entropy} proposes Fourier-based algorithms for estimating von Neumann entropy with accuracy $\varepsilon$ using $\tilde{\mathcal{O}}(\lambda_
{\min}^{-2}\varepsilon^{-5})$ and $\tilde{\mathcal{O}}(r^2\varepsilon^{-7})$ copies of $\rho$. 
These copy requirements have been improved to $\tilde{\mathcal{O}}(\lambda_{\min}^{-2}\varepsilon^{-4})$~\cite{nghiem2025measuringancilla} 
and $\tilde{\mathcal{O}}(r^2\varepsilon^{-5})$~\cite{wang2025entropy-samplizer}, respectively.
Recently, Ref.~\cite{wada2025state} achieves the copy requirements of $\tilde{\mathcal{O}}(\lambda_{\min}^{-2}\varepsilon^{-2})$ via virtual density matrix exponentiation technique, and Ref.~\cite{Kean2025-list} derives an upper bound of $\tilde{\mathcal{O}}(r^{2}\varepsilon^{-4})$ via the sample-to-query lifting.

Our algorithm achieves the cutoff-dependent scaling $\hat{\mathcal{O}}(\hat{\lambda}_{\eta}^{-2}\varepsilon^{-2})$ using an identified spectral cutoff $\hat{\lambda}_\eta$; see Lemma~\ref{lemma:vn}. 
We also derive lower bounds on $\hat{\lambda}_{\eta}$ given in Proposition~\ref{prop:termination}, showing that our copy complexity recovers the recent scalings $\tilde{\mathcal{O}}(\lambda_{\min}^{-2}\varepsilon^{-2})$ or $\tilde{\mathcal{O}}(r^{2}\varepsilon^{-4})$ derived in \cite{wada2025state, Kean2025-list} in the worst-case.
Because the identified cutoff can be substantially larger than the worst-case bound depending on states, our algorithm has a potential to largely improve the total copy requirement.

\section{Preliminary}\label{sec:preliminary}
\subsection{Problem setup}
Let $\rho$ be an $N$-dimensional density matrix acting on an $n$-qubit system, where $N = 2^n$.
In this work, we primarily focus on the sample-access model, where we are given identical copies of an unknown quantum state $\rho$.
The results in this paper can be directly extended to the coherent-access model, where one has direct access to a block-encoding~\cite{gilyen2019-qsvt} of $\rho$.

Our goal is to estimate nonlinear properties of $\rho$, which are written as $\tr[f(\rho) \rho ]$ with a target function $f$. 
For example, the target function $f$ for the von Neumann entropy is given by $f(x)=-\log x$, and $f(x)=x^{\alpha-1}$ for the R{\'e}nyi moment of order $\alpha$.

\subsection{QSVT-based property estimation}

To estimate $\tr[f(\rho) \rho ]$, we employ the QSVT-based algorithms. 
Specifically, we use a polynomial $F(\rho)$ approximating $f(\rho)$,
followed by estimating $\tr[F(\rho) \rho]$ that approximates the target quantity $\tr[f(\rho) \rho]$.
This framework is widely applicable to estimating nonlinear quantum properties~\cite{wang2024-new}.

In the sample-access model, the density matrix exponentiation (DME) technique \cite{lloyd2014-qpca} plays a key role in translating the sample access to the block-encoding, which is a general input for QSVT~\cite{gilyen2022petz, gilyen2022-fidelity}. 
The DME is a protocol for constructing the time evolution generated by an unknown quantum state (density matrix) $\rho$, by consuming copies of $\rho$.
Ref.~\cite{lloyd2014-qpca} proposes an algorithm that constructs gate operations to approximate the unitary channel of $e^{-i\rho T}$ within $\varepsilon_{\diamond}$-diamond norm error, using $\Theta(T^2/\varepsilon_{\diamond})$ copies of $\rho$~\cite{Kimmel2017-hv}.
Recently, by focusing on the property estimation, Ref.~\cite{wada2025state} developed the virtual DME protocol that requires only $\mathcal{O}(T^2\log(T/\varepsilon_{\diamond})) = 
\tilde{\mathcal{O}}(T^2)$ copies of $\rho$ to effectively simulate $e^{-i\rho T}$.
We use the tilde notation $\tilde{\mathcal{O}}(\cdot)$ that hides the logarithmic factors.
In exchange for the logarithmic dependence on $\varepsilon_{\diamond}$, the virtual DME is limited to expectation-value estimation tasks, and it must be combined with the subsequent measurement process. 
However, for our purpose of property estimation, the virtual DME is sufficient.

In practice, QSVT can implement only polynomials satisfying the standard admissibility conditions~\cite{gilyen2019-qsvt}. 
Moreover, in the sample-access model, the block-encoding of $\rho$ obtained from DME naturally comes with a constant normalization factor $\pi$~\cite[Corollary 71]{gilyen2019-qsvt}. 
We therefore choose an implementable polynomial $G$ that approximates a rescaled and shifted version of the target function:
\begin{equation}\label{eq:f-to-G}
    f(x) \approx   F(x) = \alpha_F G(x/\pi) + \beta_F,
\end{equation}
where $\alpha_F$ and $\beta_F$ are constants. 
The following lemma gives the copy complexity for estimating a nonlinear property 
$\tr[(\alpha_F G(\rho/\pi) + \beta_F)\rho] \approx \tr[f(\rho)\rho]$ with $G$ a QSVT-implementable polynomial.

\begin{lemma}[QSVT-based estimation via virtual DME]\label{lemma:virtual-qsvt}
    Let $\rho$ be an $N$-dimensional quantum state.
    Let $G(x)$ be a real polynomial of degree at most $D$ satisfying $|G(x)|\leq 1$ for all $x\in[-1,1]$ and the standard parity condition for QSVT.
    For $\varepsilon >0$ and $\vartheta \in (0,1)$, $\tr[(\alpha_F G(\rho/\pi) + \beta_F) \rho]$ can be estimated within additive error $\varepsilon$ with success probability at least $1 - \vartheta$.
    This uses 
    $\tilde{\mathcal{O}}\!\left( D^2\right)$ copies of $\rho$ per circuit and $\tilde{\mathcal{O}}\!\left(\alpha_F^2\varepsilon^{-2}\log(\vartheta^{-1})\right)$ measurement shots.
    Consequently, the total copy complexity is $\tilde{\mathcal{O}}\!\left(D^2\alpha_F^2\varepsilon^{-2}\log(\vartheta^{-1})\right)$.
\end{lemma}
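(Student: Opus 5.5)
The plan is to assemble three ingredients: a block-encoding of $\rho/\pi$ obtained from density matrix exponentiation, the QSVT circuit for $G$, and a Hadamard-test-type estimator whose mean is $\tr[G(\rho/\pi)\rho]$. The constants $\alpha_F,\beta_F$ are then handled classically. Since $\tr[(\alpha_F G(\rho/\pi)+\beta_F)\rho]=\alpha_F\tr[G(\rho/\pi)\rho]+\beta_F$, it suffices to estimate $\tr[G(\rho/\pi)\rho]$ to additive error $\varepsilon/(2|\alpha_F|)$ with failure probability $\vartheta$. The output is rescaled by $\alpha_F$ and shifted by $\beta_F$; the shift needs no quantum resources because $\tr\rho=1$.

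\emph{Circuit construction.} Following \cite[Corollary 71]{gilyen2019-qsvt}, controlled applications of $e^{\pm i\rho t}$ for constant $t$ yield a block-encoding of $\rho/\pi$, up to the stated constant normalization. The QSVT circuit realizing $G(\rho/\pi)$, for $G$ of degree $D$ with $|G|\le 1$ and definite parity, uses $\mathcal{O}(D)$ calls to this block-encoding. These calls amount to a sequence of $\mathcal{O}(D)$ controlled evolutions of total evolution time $T=\mathcal{O}(D)$.

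We feed a fresh copy of $\rho$ into the system register. A Hadamard test on the ancilla, or equivalently projecting the block-encoding ancilla, then gives a $\pm1$-valued random variable $Z$ with $\mathbb{E}[Z]=\tr[G(\rho/\pi)\rho]$, up to simulation error. For Hermitian real $G$ the real part is exactly this quantity.

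\emph{Main step: replacing exact evolutions by virtual DME.} I expect this to be the main obstacle. We invoke the virtual DME protocol of \cite{wada2025state}, which simulates $e^{-i\rho T}$ inside an expectation-value estimation using $\mathcal{O}(T^2\log(T/\varepsilon_\diamond))$ copies. The delicate point is that the evolution time is split into many short controlled segments interleaved with the QSVT phase rotations, and virtual DME produces a quasi-probabilistic (virtual) channel rather than a physical one.

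The plan is to apply virtual DME to each segment and bound the overall bias by a telescoping/hybrid argument. Each segment error is controlled in diamond norm, so the accumulated bias in $\mathbb{E}[Z]$ is at most the sum of the segment errors. We choose $\varepsilon_\diamond=\Theta(\varepsilon/(|\alpha_F| D))$ per segment, which keeps the total bias below $\varepsilon/(4|\alpha_F|)$.

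We must also check that the sampling overhead of the virtual protocol (the $\ell_1$-norm of its quasi-probability decomposition) stays $\mathcal{O}(1)$ overall. This is what \cite{wada2025state} guarantees for total time $T$. With it, the per-shot estimator stays bounded by a constant and the copy count per circuit is $\mathcal{O}(D^2\log(D|\alpha_F|/\varepsilon))=\tilde{\mathcal{O}}(D^2)$.

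\emph{Statistics.} Each shot gives a bounded estimator whose mean lies within $\varepsilon/(4|\alpha_F|)$ of $\tr[G(\rho/\pi)\rho]$. Hoeffding's inequality, or a median-of-means if the quasi-probability weights are used, shows that $\mathcal{O}(\alpha_F^2\varepsilon^{-2}\log\vartheta^{-1})$ shots suffice to reach statistical error $\varepsilon/(4|\alpha_F|)$ with probability $\ge 1-\vartheta$.

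Multiplying by $\alpha_F$, the total error is at most $\varepsilon$. Multiplying the per-circuit copy cost by the number of shots gives the total complexity $\tilde{\mathcal{O}}(D^2\alpha_F^2\varepsilon^{-2}\log\vartheta^{-1})$.
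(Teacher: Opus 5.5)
Your proposal follows essentially the same route as the paper: Corollary~71 of \cite{gilyen2019-qsvt} block-encodes $\rho/\pi$, and the QSVT circuit for $G$ is treated as an interleaved sequence of $\rho$-evolutions and $\rho$-independent unitaries. That sequence is simulated by the virtual DME of \cite{wada2025state} with $\tilde{\mathcal{O}}(D^2)$ copies per circuit, and a Hadamard test with $\mathcal{O}(\alpha_F^2\varepsilon^{-2}\log(\vartheta^{-1}))$ shots follows, with $\beta_F$ added classically; the paper applies the interleaved-sequence guarantee once with a global $\varepsilon_\diamond=\varepsilon/(4\alpha_F)$ rather than per segment. You omit that Corollary~71 only gives an $\varepsilon_{\rm BE}$-precise block-encoding at $\mathcal{O}(\log(1/\varepsilon_{\rm BE}))$ controlled evolutions per call, so $M=\mathcal{O}(D\log(1/\varepsilon_{\rm BE}))$ rather than $\mathcal{O}(D)$ and a bias of order $D\varepsilon_{\rm BE}$ must be budgeted (the paper sets $\varepsilon_{\rm BE}=\varepsilon/(4D\alpha_F)$), which only adds polylogarithmic factors; also, projecting the block-encoding ancilla is not equivalent to the Hadamard test, since it yields $\tr[G(\rho/\pi)^2\rho]$ rather than $\tr[G(\rho/\pi)\rho]$.
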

\begin{proof}
    From Ref.~\cite[Corollary 71]{gilyen2019-qsvt}, given access to the controlled time-evolution $e^{i\rho/2}$, we can construct an $\varepsilon_{\rm BE}$-precise block-encoding of $\rho/\pi$,
    requiring $\mathcal{O}(\log(1/\varepsilon_{\rm BE}))$ uses of the controlled-$e^{i\rho/2}$ and its inverse.
    
    Since $G$ is a degree-$D$ polynomial, the block-encoding of $G(\rho/\pi)$ requires $\mathcal{O}(D)$ queries to the block-encoding of $\rho/\pi$ and its inverse, leading to $M = \mathcal{O}(D\log(1/\varepsilon_{\mathrm{BE}}))$ uses of $e^{i\rho/2}$. 
    We regard the QSVT circuit for $G(\rho/\pi)$ as an interleaving sequence of time evolutions generated by $\rho$ and unitaries independent of $\rho$.
    Ref.~\cite{wada2025state} shows that $M$-interleaving sequences of the time evolutions and the arbitrary unitaries can be simulated via virtual DME with $\varepsilon_{\diamond} > 0$ in the diamond norm,
    using $\mathcal{O}(M^2\log(M/\varepsilon_{\diamond}))$ copies per circuit.
    By setting $\varepsilon_{\mathrm{BE}} = \varepsilon/(4D \alpha_F)$ and  $\varepsilon_{\diamond} = \varepsilon/(4\alpha_F)$, we obtain the circuit with $\varepsilon/(2\alpha_F)$ approximation error in the diamond norm, which gives $\hat{\mathcal{O}}(D^2(\log(\alpha_FD/\varepsilon))^3)$ copies per circuit execution up to $\log\log$ factors.
    
    Lastly, consider estimating $\tr[G(\rho/\pi) \rho]$ using the Hadamard test and allocating $\varepsilon/(2\alpha_F)$ to the statistical error.
    Note here that $\tr[\beta_F \rho] = \beta_F$, thus we do not need to estimate this term.
    The Hadamard test requires $\mathcal{O}(\alpha_F^2\varepsilon^{-2}\log(\vartheta^{-1}))$ shots.
    Therefore, the total number of copies is
    $\mathcal{O}\!\left(
    D^2 \alpha_F^2\varepsilon^{-2}(\log(D/\varepsilon))^3
    \log(\vartheta^{-1})
    \right)$.
\end{proof}

\section{Results overview}
\label{sec:concept}

In this section, we present our property estimation algorithm in a broad view. 
Particularly, we show the cost for the case when we apply the algorithm to a fundamental nonlinear property of quantum states, von Neumann entropy.
We also present the application to R{\'e}nyi entropy in Appendix~\ref{appendix:renyi}.

\subsection{$(\eta, P)$-spectral cutoff}
First, we formally describe the problem and the main idea.
Suppose that the target function is $f(x)$. 
For the property estimation, the following functions are typical: $\log x$, $x^{c}$ with negative or non-integer $c$, the sign function $\mathrm{sign}(x)$, and the step function $\Theta(x)$. 
These functions are singular or non-analytic at $x=0$.
To approximate such a function, we use an implementable polynomial function $G(x)$ and its modification $F(x) = \alpha_F G(x/\pi) +\beta_F$ satisfying 
\begin{align}
    \left|  G(x/\pi) - \frac{f(x) - \beta_F}{\alpha_F}  \right| &\leq
    \varepsilon_{\mathrm{poly}}, \quad \forall x\in[\delta, 1] \notag,
    \\
    |G(x)| &\le 1, \quad \forall x\in[-1, 1],\label{eq:def-g}
\end{align}
where $\alpha_F > 0$ is a normalization factor ensuring $|G(x)| \le 1$,
$\varepsilon_{\mathrm{poly}} > 0$ is a polynomial approximation error, and $\delta>0$ is an eigenvalue threshold.
Then, $F(x)$ satisfies
\begin{align}
    \left|  F(x) - f(x) \right| &\leq
    \alpha_F\varepsilon_{\mathrm{poly}}, \quad \forall x\in[\delta, 1] \notag,
    \\
    |F(x)| &\le \alpha_F + |\beta_F|, \quad \forall x\in[-1, 1].\label{eq:def-f}
\end{align}
For the typical functions discussed above, the degree of the polynomial $G$ must scale as
$\Theta(\delta^{-1}\log(\varepsilon_{\mathrm{poly}}^{-1}))$ in order to satisfy Eq.~\eqref{eq:def-g}~\cite{gilyen2019-qsvt}.
Hence, the complexity of QSVT to implement such $G$ is dominated by $\delta^{-1}$.
For the target $\tr[f(\rho)\rho]$, decomposing the spectrum into the main approximation region $[\delta,1]$ and the tail region $[-1,\delta]$ gives the following bound on the estimation bias of QSVT:
\begin{eqnarray}
& & \hspace*{-1em}
   \Big| \tr[F(\rho) \rho ] - \tr [f(\rho)\rho ] \Big|
\nonumber \\ & & \hspace*{0.8em}
    = \Big|  \tr [(F - f) \rho \Pi_{\rho > \delta}] 
     + \tr[( F - f) \rho \Pi_{\rho \leq \delta}]  \Big|
\nonumber \\ & & \hspace*{0.8em}
    \le \alpha_F \varepsilon_{\mathrm{poly}} 
       + \Big|\tr[F(\rho) \rho \Pi_{\rho \leq \delta}] \Big|  
       + \Big|\tr[f(\rho) \rho \Pi_{\rho \leq \delta}] \Big|
\nonumber \\ & & \hspace*{0.8em}
\le \alpha_F
    \varepsilon_{\mathrm{poly}} + (\alpha_F+ |\beta_F|) C(\delta) 
       + \Big|\tr[f(\rho) \rho \Pi_{\rho \leq \delta}] \Big|
\nonumber
    \\ & & \hspace*{0.8em}
      := \alpha_F\varepsilon_{\mathrm{poly}} + \varepsilon_{\mathrm{tail}}(\delta).
\label{eq:error_decomposition}
\end{eqnarray} 

Here, $\Pi_{\rho \leq \delta}$ and $\Pi_{\rho > \delta}$ are the projectors onto the subspace spanned by eigenvectors of $\rho$ with eigenvalues $\lambda \leq \delta$ and $\lambda > \delta$, respectively.
The term 
\begin{equation}
    C(x) = \tr[\rho \Pi_{\rho \leq x}]
\end{equation}
represents the cumulative distribution function (CDF) of the eigenvalues. Note that $C(x)$ is an unknown function.
In Eq.~\eqref{eq:error_decomposition}, the total error is bounded by the three components: the polynomial approximation error in the main approximation region, the contribution from the tail region, and the missing true value of the target function in the tail.
We define $\varepsilon_{\mathrm{tail}}(\delta)$ as the tail-induced approximation error from the second and third terms.

In order to lower the complexity for approximating $f$, we seek to find as large as possible $\delta$ such that $\varepsilon_{\mathrm{tail}}(\delta) \le \eta$ for an error tolerance $\eta>0$.
If we choose an appropriate function $P_{\mathrm{tail}}$ satisfying
\begin{align}
    P_{\mathrm{tail}}(\delta) \ge \varepsilon_{\mathrm{tail}}(\delta)\label{eq:bound_P}
\end{align}
for each target $f$, we can control its tail approximation error $\varepsilon_{\mathrm{tail}}(\delta)$ through a cutoff of $P_{\mathrm{tail}}(\delta)$.
For this purpose, we define a \textit{boundary} $(\eta, P)$-spectral cutoff $\lambda_\eta$.
Given a target value $\eta$ and a bounding function $P(\delta)$, 
$\lambda_\eta$ is formally defined as 
\begin{equation}
\label{eq:def-cutoff}
    \lambda_\eta = \sup \{\delta \in(0, 1]: P(\delta) \le \eta \}.
\end{equation}
This definition is analogous to the percentile, i.e.,
$(\eta, P)$-spectral cutoff coincides with the $\eta$-percentile of the density matrix in the case $P(\delta) = C(\delta)$. 
However, for a discontinuous function $P$, this boundary cutoff may not satisfy $P(\lambda_\eta)\le \eta$.
For instance, when $P(x)=0$ for $x\in (0,1/2)$ and $P(x)=1$ otherwise, we have $\lambda_{0}=1/2$ but it gives $P(\lambda_0)=1>0$.
Instead, we introduce a \textit{certified} $(\eta, P)$-spectral cutoff $\hat{\lambda}_{\eta} >0$ satisfying 
\begin{equation}
\label{eq:def-certified-cutoff}
    P(\hat{\lambda}_\eta) \le \eta,
\end{equation}
and we aim to find $\hat{\lambda}_{\eta}$ as close to $\lambda_{\eta}$ as possible.
Using a certified $(P_{\mathrm{tail}}, \eta)$-spectral cutoff $\hat{\lambda}_\eta$, the tail-induced error can be bounded as $\varepsilon_{\mathrm{tail}}(\hat{\lambda}_{\eta})\le P_{\mathrm{tail}}(\hat{\lambda}_\eta) \le \eta$. 
We remark that a maximizer in Eq.~\eqref{eq:def-cutoff} need not exist.

As evident from this formulation, the $(\eta, P_{\mathrm{tail}})$-spectral cutoff is an essential parameter that captures the approximation error. 
In contrast, the minimum non-zero eigenvalue $\lambda_{\min}$ and the rank $r$ used in previous works provide merely the worst-case lower bounds of the polynomial degree.
For instance, in the simple case $P_{\mathrm{tail}} = C$, one can trivially ensure $C(\delta) \le \eta$ by setting $\delta < \lambda_{\min}$ (yielding $C(\delta)=0$) or $\delta \le \eta/r$.
Thus, these inputs also provide the worst-case lower bound of $\lambda_\eta$.
However, these bounds are often loose and, as a result, the required polynomial degree in QSVT can be overestimated; see Fig.~\ref{fig:idea}.

By construction, both boundary and certified $(\eta, P_{\mathrm{tail}})$-spectral cutoffs $\lambda_{\eta}$ and $\hat{\lambda}_{\eta}$ depend on the task, the state, and the target accuracy.
To make the full use of its theoretical advantages, a $(\eta, P_{\mathrm{tail}})$-spectral cutoff should be carefully determined. 
Once $\eta$ and $P_{\mathrm{tail}}$ are fixed, the remaining task is to identify an appropriate cutoff $\hat{\lambda}_\eta$ satisfying Eq.~\eqref{eq:def-certified-cutoff}, depending on the spectrum of $\rho$. 
This state dependence is the main advantage of our method. However, because $\rho$ is unknown, identification of $\hat{\lambda}_\eta$ is challenging in practice, even in the simplest case $P_{\mathrm{tail}} = C$.

\subsection{Two-stage framework}\label{sec:two-stage}

Our property estimation algorithm is composed of two stages. 
The first stage deals with the difficulty in identifying the state-dependent certified cutoff $\hat{\lambda}_\eta$ via an iterative method. 
In the second stage, we estimate the property using a low-degree polynomial designed from the identified cutoff $\hat{\lambda}_\eta$ in the first stage.

The overall pipeline is illustrated in Fig.~\ref{fig:pipeline}. 
The primary inputs are multiple copies of $\rho$, the accuracy $\varepsilon$ for the property estimation, and the failure probability $\vartheta$.
Given these inputs, the algorithm proceeds as follows.
\begin{description}
    \item[Preparation] Before the execution on a quantum device, we design the algorithm parameters: the target value $\eta$ and the bounding function $P$,
    so that the property is estimated within the additive error $\varepsilon$.
    
    \item[Stage~1 (Identification)] Given $\eta, P, \vartheta$ and $\rho$, a quantum computer executes Algorithm~\ref{alg-search}.
    The algorithm identifies a certified $(\eta, P)$-spectral cutoff $\hat{\lambda}_\eta$ with a success probability at least $1- \vartheta/2$. 
    A lower bound of $\hat{\lambda}_\eta$ is guaranteed in Proposition~\ref{prop:termination}.
    This stage consumes $\tilde{\mathcal{O}}(\hat{\lambda}^{-2}_\eta\eta^{-2}\log(\vartheta^{-1}))$
    copies of $\rho$ as stated in Theorem~\ref{thm:search} in Section~\ref{sec:search} in the case $P=C$.
    
    \item[Stage~2 (Estimation)] Using the identified cutoff $\hat{\lambda}_\eta$, the polynomials $F$ and $G$ are classically designed, whose degree is given by $D = \tilde{\mathcal{O}}(\hat{\lambda}_\eta^{-1})$.
    Given $F$, $\varepsilon$, $\vartheta$, and $\rho$, a quantum computer executes a QSVT-based algorithm, which estimates $\tr[F(\rho)\rho]$ with a success probability at least $1- \vartheta/2$ as shown in Lemma~\ref{lemma:virtual-qsvt}. 
    This stage gives an estimate of the target property $\mathrm{tr}[f(\rho)\rho]$, by consuming $\tilde{\mathcal{O}}(\hat{\lambda}_\eta^{-2}\alpha_F^2\varepsilon^{-2}\log(\vartheta^{-1}))$ copies of $\rho$.
\end{description}
\begin{figure}[htbp]
    \centering    
    \includegraphics[width=1\linewidth]{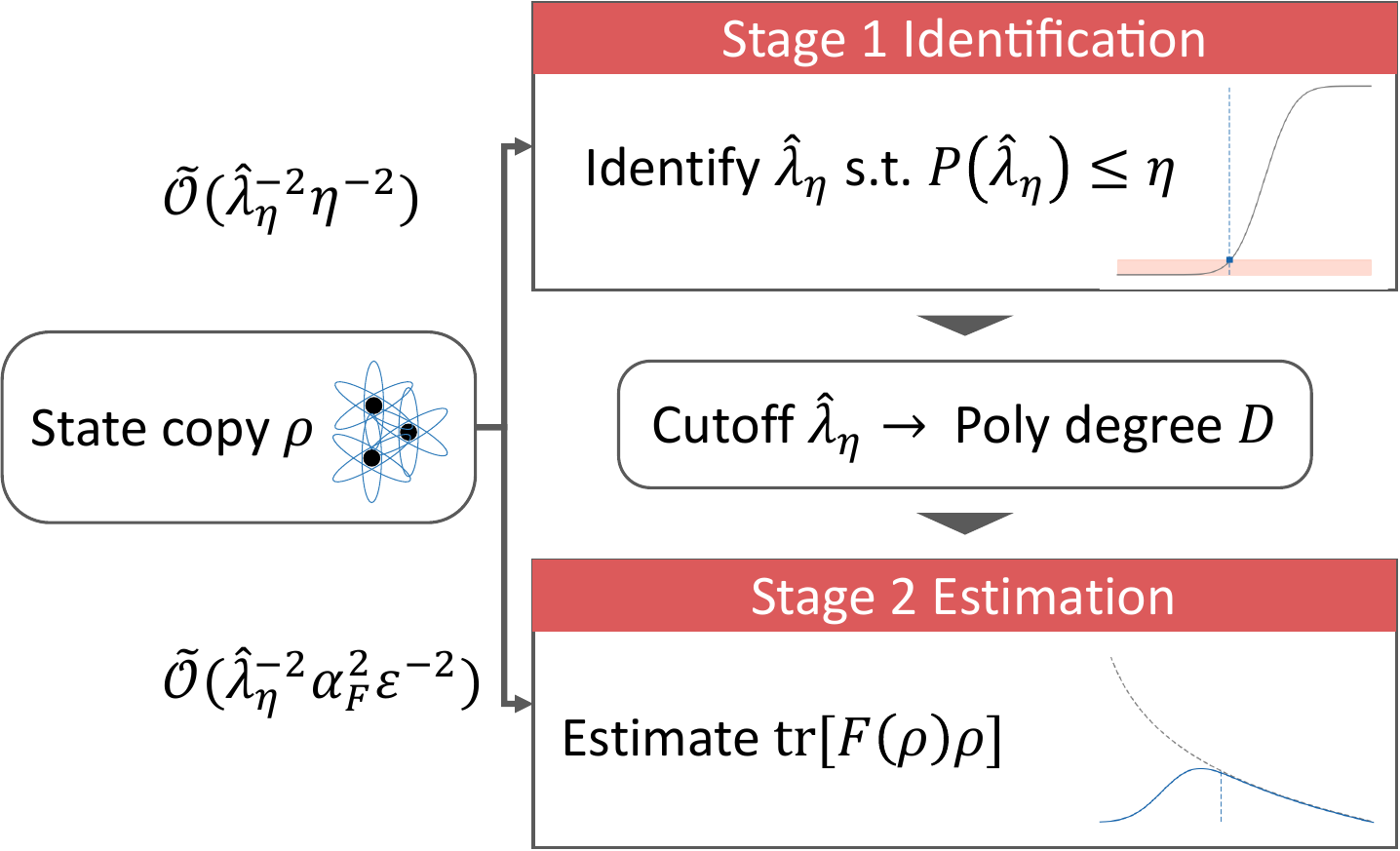}
    \caption{The overall pipeline of the two-stage algorithm.}
    \label{fig:pipeline}
\end{figure}

We apply this framework for estimating von Neumann entropy and evaluate the total copy requirements, summarized in Table~\ref{tab:total-complexity}; the proof will be given in Lemma~\ref{lemma:vn}.
In this case, as will be shown later, we can choose $P=C$. 
Notably, the copy requirement in Stage~1 is comparable to or smaller than that in Stage~2.
This implies that Stage~1 does not change the leading-order of the total copy complexity.
Furthermore, while the complexity of our algorithm is determined by $\hat{\lambda}_\eta$,
we also derive a worst-case complexity based on the rank $r$ of $\rho$ or $\lambda_{\min}$, for the fair comparison to previous works. 
More precisely, from Proposition~\ref{prop:termination} ($\hat{\lambda}_\eta$ is larger than $\Omega(\eta/r)$ or $\Omega(\lambda_{\min})$ without knowing $r$ or $\lambda_{\min}$) and $\eta\sim\varepsilon$,
our worst-case copy requirements are
$\tilde{\mathcal{O}}(r^2 \varepsilon^{-4})$ and $\tilde{\mathcal{O}}(\lambda_{\min}^{-2} \varepsilon^{-2})$, which match the best-known scaling in the sample-access model~\cite{Kean2025-list, wada2025state}.
As illustrated in Section~\ref{sec:case-study-demonstration}, since the identified cutoff $\hat{\lambda}_\eta$ can be substantially larger than the worst-case bound, our algorithm can practically improve the total copy requirement for von Neumann entropy estimation.

\begin{table}[htbp]
    \centering
    \renewcommand{\arraystretch}{1.5}
    \begin{tabular}{ccc|c}\hline \hline
        Identification ~~ & Estimation ~~ & Total ~~ & ~~ Total (worst case)\\
        \hline
        $\hat{\lambda}_{\eta}^{-2} \varepsilon^{-2}$ & $\hat{\lambda}_{\eta}^{-2} \varepsilon^{-2}$ &
        $\hat{\lambda}_{\eta}^{-2} \varepsilon^{-2}$ &
        $r^2 \varepsilon^{-4}, ~~~ \lambda_{\min}^{-2} \varepsilon^{-2}$\\
        \hline \hline
    \end{tabular}
    \renewcommand{\arraystretch}{1}
    \caption{
    The leading order of necessary copies of $\rho$ for estimating von Neumann entropy within an additive error $\varepsilon$ in high probability; the copies are consumed in the identification stage (Stage~1) and the estimation stage (Stage~2). 
    The worst-case copy requirements described in terms of the rank $r$ or the minimum non-zero eigenvalue $\lambda_{\min}$ match the best-known scaling~\cite{Kean2025-list, wada2025state}. 
    }
    \label{tab:total-complexity}
\end{table}

\section{Identification algorithm for the spectral cutoff}
\label{sec:search} 

In this section, we present an iterative algorithm for identifying $\hat{\lambda}_\eta$ 
(Algorithm~\ref{alg-search}) in Stage~1. 
The algorithm presented here employs the core case of 
\begin{equation}
   P(\delta) = C(\delta), 
\end{equation}
which is applicable to estimate von Neumann entropy.
In Appendix~\ref{appendix:renyi}, we provide the identification algorithm for another choice of $P$ applicable to R{\'e}nyi entropy.

The key idea is to construct a degree-$d$ approximate CDF $\tilde{C}_d(x)$ and find the largest $x$ satisfying $\tilde{C}_d(x)\leq\eta$ via a classical search, because the exact CDF $C(x)$ is unknown. 
To construct $\tilde{C}_d(x)$, we leverage the efficient Fourier-based approach developed in Refs.~\cite{lin2022heisenberg, wan2022randomized}.
These previous works aimed to estimate the minimum non-zero eigenvalue of a given Hamiltonian, under the assumption that the prepared state has a large overlap with the true ground state.
We leverage the CDF construction and modify it for our search purpose.

\subsection{Approximate CDF construction}

Here, we review the method for constructing the degree-$d$ approximate CDF $\tilde{C}_d(x)$~\cite{lin2022heisenberg, wan2022randomized}.
Let $\rho$ be a quantum state and
$H$ be a Hamiltonian.
For validating our search algorithm, we assume $0 \le H \le I$, while the original CDF construction works if $\|H\| \le \pi /3$. 
Later, we will apply the result to our primary interesting case $H=\rho$.
Let $p$ denote the spectral distribution of $H$ induced by the input state $\rho$; 
that is, $p(x)=\sum_k p_k \delta(x-\lambda_k)$ with $\lambda_k$ the eigenvalue of $H$ and $p_k$ the probability to get $\lambda_k$ when measuring $H$.
Following Refs.~\cite{lin2022heisenberg, wan2022randomized}, the exact CDF is written as
the convolution of the Heaviside step function $\Theta(x)$ and the spectral distribution:
\begin{equation}
    C(x)=(\Theta*p)(x)=\tr[\rho\Pi_{H\le x}],
\end{equation}
where $\Pi_{H \le x}$ is the projector onto the eigenspaces of $H$ with all eigenvalues $\lambda_k$ satisfying $\lambda_k \leq x$. 
By approximating the Heaviside step function with a truncated Fourier series as 
\begin{equation}
    \Theta(x) \approx \tilde{\Theta}(x) = \sum_{|j| \le d} F_{j,d} e^{i x j}, 
\end{equation}
we obtain the degree-$d$ approximate CDF $\tilde{C}_d(x)$:
\begin{equation}
    \tilde{C}_d(x) = (\tilde{\Theta}*p)(x)
    = \sum_{|j| \le d} F_{j,d} e^{i x j} \tr[\rho e^{-i H j}],
\label{eq:def-tilde-Cd}
\end{equation}
where $d$ is a positive odd integer and $F_{j,d}$ are the Fourier coefficients explicitly defined in Appendix~\ref{appendix:fourier-heaviside}. 
Note that 
\begin{equation}
    \mu_j = \tr[\rho e^{-i H j}]
\end{equation}
can be estimated via the Hadamard test.
The approximate CDF $\tilde{C}_d(x)$ satisfies the following error bound:
\begin{equation}\label{eq:tildeC-bound}
    C(x - \delta) -  \zeta \leq  \tilde{C}_d(x) \leq C(x + \delta) + \zeta,
\end{equation}
where $\delta \in (0, \pi/2)$ denotes a resolution of the approximate CDF, and $\zeta > 0$ is an approximation error of $\Theta$, i.e., $|\Theta(x)-\tilde{\Theta}(x)|\leq \zeta$ in $x\in [-1, -\delta] \cup [\delta, 1]$.
The degree $d$ depends on these parameters as $d = g(\delta, \zeta) = \mathcal{O}\!\left(\delta^{-1} \log(\zeta^{-1})\right)$, where $g(\delta, \zeta)$ is formally defined in Appendix~\ref{appendix:fourier-heaviside}.

To estimate $\tilde{C}_d(x)$, the previous works \cite{lin2022heisenberg, wan2022randomized} take an importance sampling approach.
To implement the importance sampling, we rewrite $\tilde{C}_d(x)$ as
\begin{align}
\label{eq:approx-cdf-with-mu}
    \tilde{C}_d(x)
    = \mathcal{F}_d \sum_{|j|\le d} e^{i \phi_{j,d}(x)} \frac{|F_{j,d}|}{\mathcal{F}_d} \mu_j,
\end{align}
where $\phi_{j,d}(x)$ is defined such that $F_{j,d}e^{ixj} = |F_{j,d}| e^{i \phi_{j,d}(x)}$ and $\mathcal{F}_d = \sum_{|j|\le d} |F_{j,d}|$.
Let $J$ be a random variable such that $\mathrm{Pr}[J=j] = |F_{j,d}|/\mathcal{F}_d$.
By $N_{d}$ times sampling of $J$, we obtain $\hat{C}_d(x)$ as an estimate of $\tilde{C}_d(x)$:
\begin{gather}
    \hat{C}_d(x) = \frac{\mathcal{F}_d}{N_{d}}\sum_{k=1}^{N_{d}} e^{i\phi_{J,d}(x)}(X_{J,k} + i Y_{J, k}),
    \label{eq:def-hat-c}
\end{gather}
where $X_{J,k}, Y_{J,k} \in \{\pm 1\}$ are single-shot measurement outcome of real and imaginary part of $\mu_j=\tr[\rho e^{- iH j}]$, respectively. 
In practice, we should use a symmetrized sampling for Fourier modes $j$ and $-j$, so that the estimator is real-valued for every finite sample size.

\subsection{Iterative search algorithm}

Now let us consider the problem of identifying a certified $(\eta, C)$-spectral cutoff $\hat{\lambda}_{\eta}$.
There are two main challenges in our setting.
First, as in the previous CDF-based search algorithms, we cannot use the exact unknown CDF $C(x)$ but only its estimate $\hat{C}(x)$ to identify $\hat{\lambda}_{\eta}$. 
We need to guarantee $C(\hat{\lambda}_\eta)\leq \eta$ even under Fourier-based approximation and statistical errors contained in $\hat{\lambda}_\eta$. 
Second, an appropriate choice of the resolution width $\delta$, which largely affects both the estimation cost and the quality of the cutoff $\hat{\lambda}_\eta$, is unknown a priori.

To address the first challenge, we provide a certification rule with high probability.
For given resolution width $\delta$, approximation error $\zeta$, statistical error $\xi$, and an evaluation point $x$, we choose $d = g(\delta, \zeta)$ so that the following relation holds with a success probability at least $1-\vartheta$:
\begin{equation}\label{eq:cdf-approximation-bound}
    C(x - \delta) - \zeta - \xi \leq  \hat{C}_d(x) \leq C(x + \delta) + \zeta + \xi.
\end{equation}
To satisfy this condition, we need 
\begin{equation}
    N_d = \tilde{\mathcal{O}}(\mathcal{F}_d^2 \xi^{-2} \log(\vartheta^{-1}))
    \label{eq:nd}
\end{equation}
total measurement shots.
Then, from $\hat{C}_d(x)$, we obtain an eigenvalue candidate
\begin{equation}
\label{eq:find-lambda-star}
     \lambda_{*} = \underset{x\in \chi_{\delta}:\hat{C}_d(x + \delta) \leq \eta - \zeta - \xi }{\max} x,
\end{equation}
where $\chi_{\delta}$ is the following set of grid points:
\begin{equation}
     \chi_{\delta} =\{0, \delta, 2\delta, \cdots, \lfloor \| H \|/\delta\rfloor \delta \}.
     \label{eq:chi}
\end{equation}
If the condition $\hat{C}_d(x + \delta) \leq \eta - \zeta - \xi$ is not satisfied, we set $\lambda_*=0$. 
Then we can certify that $C(x)\le \eta$ because 
\begin{equation}\label{eq:certification-cond}
    C(x)  \le \hat{C}_d(x +  \delta) + \zeta + \xi \leq \eta.    
\end{equation}
Thus, the candidate $\lambda_*$ is a certified cutoff.

To address the second challenge, we refine the resolution geometrically as
$\delta_l = 2^{-l}$ with a positive integer $l$, or equivalently increase the approximation degree $d_l$.
This refinement schedule does not introduce an additional asymptotic overhead.
To ensure that the identified cutoff candidate $\lambda_*$ is well-separated from zero (that is, a nontrivial $\lambda_*$ is obtained), we terminate the algorithm when $\lambda_* \ge 2\delta_l$ holds, and then return $\hat{\lambda}_\eta =2\delta_l$.
In practice, $\hat{\lambda}_\eta$ may be chosen by $\hat{\lambda}_\eta = \lambda_* \ge 2\delta_l$ to reduce the polynomial degree, but here we take $\hat{\lambda}_\eta = 2\delta_l$ to simplify the following complexity analysis.

Algorithm~\ref{alg-search} is our identification algorithm for $\hat{\lambda}_\eta$ that incorporates the above ideas.
Theorem~\ref{thm:search} below shows the complexity of Algorithm~\ref{alg-search}. 
We emphasize that the total copy requirements of the algorithm $\tilde{\mathcal{O}}(\lambda_\eta^{-2}\eta^{-2})$ is comparable to that of Stage~2. 
Also, Proposition~\ref{prop:termination} guarantees the termination of the algorithm and provides the lower bound of the algorithm output.
\noindent
\begin{minipage}{0.95\columnwidth}
    \begin{algorithm}[H]
    \caption{Iterative search of spectral cutoff}
    \label{alg-search}
    \noindent
        \begin{algorithmic}[1]
        \Require {Access to $\rho$ and controlled $e^{-iH}$, target value $\eta \in (0, 1)$, success probability $1-\vartheta \in ( 0,1)$ }
        \Ensure A certified cutoff $\hat{\lambda}_{\eta}$ such that $C(\hat{\lambda}_{\eta}) \le \eta$ \\ with success probability at least $1-\vartheta$.
        \State Initialize fixed parameters: $\xi \leftarrow \eta / 8$, $\zeta \leftarrow \eta / 8$.
        \For{$l=1$ to $\infty$}
            \State{  Set variables: $\delta_{l} \leftarrow 2^{-l}$, $d_l \leftarrow         g(\delta_l, \zeta)$,
                $\chi_{\delta_l}\leftarrow$Eq.\eqref{eq:chi}, 
                $\vartheta_{l} \leftarrow 6 \vartheta / (\pi^2 l^2)$, and $N_{d_l}\leftarrow \mathcal{O}(\mathcal{F}_{d_l}^2 \xi^{-2}\log(|\chi_{\delta_l}|/\vartheta_l))$.}
            \State {Construct $\hat{C}_{d_l}(x) \leftarrow$ Eq.~\eqref{eq:def-hat-c} by $N_{d_l}$ measurement shots.}
            \State {Find $\lambda_{*} \leftarrow$ Eq.~\eqref{eq:find-lambda-star} via classical search on $\chi_{\delta_l}$ \\
            with $\delta\leftarrow \delta_l$ and $d\leftarrow d_l$. }
            \If{$\lambda_{*} \ge  2 \delta_{l}$}
                \State \Return $\hat{\lambda}_{\eta } =  2\delta_l$
            \EndIf
        \EndFor
        \end{algorithmic}
    \end{algorithm}
\end{minipage}

\begin{thm}[Spectral cutoff search]\label{thm:search}
    Let $\rho$ be an unknown but accessible quantum state.
    Let $H$ be a positive semidefinite Hamiltonian with $\|H\| \leq 1$, and suppose we have access to controlled Hamiltonian evolution $e^{-i H}$.
    Let $\eta \in (0,1)$ be a target value, and $\vartheta \in (0,1)$ be a failure probability.
    For $\rho$, $\eta$, and  $\vartheta$, Algorithm~\ref{alg-search} identifies the certified $(\eta, C)$-spectral cutoff $\hat{\lambda}_\eta$ with a success probability at least $1-\vartheta$ such that $C(\hat{\lambda}_\eta) \le \eta$ with
    $\tilde{\mathcal{O}} (\hat{\lambda}_{\eta}^{-1})$ simulation time in a single circuit, and $\tilde{\mathcal{O}}\!\left(\hat{\lambda}_{\eta}^{-1} \eta^{-2}\log(\vartheta^{-1})\right)$ total simulation time.

    Moreover, if $H = \rho$,  Algorithm~\ref{alg-search} combined with the virtual DME identifies $\hat{\lambda}_\eta$ using $\tilde{\mathcal{O}}(\hat{\lambda}_{\eta}^{-2})$ copies of $\rho$ in a single circuit,
    and $\tilde{\mathcal{O}}\!\left(\hat{\lambda}_{\eta}^{-2} \eta^{-2}\log(\vartheta^{-1})\right)$ total copies.
\end{thm}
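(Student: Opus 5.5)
The proof splits into three parts: correctness (certification), cost of a single round, and cost summed over rounds. Termination and lower bounds on $\hat{\lambda}_\eta$ are left to Proposition~\ref{prop:termination}; the complexities are stated in terms of the returned $\hat{\lambda}_\eta$.

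\textbf{Correctness.} In round $l$ the estimator $\hat{C}_{d_l}$ is evaluated at the $|\chi_{\delta_l}|+1=\mathcal{O}(2^{l})$ points $x+\delta_l$ with $x\in\chi_{\delta_l}$. With $N_{d_l}=\mathcal{O}(\mathcal{F}_{d_l}^2\xi^{-2}\log(|\chi_{\delta_l}|/\vartheta_l))$, a Hoeffding bound applies to the bounded i.i.d. summands in Eq.~\eqref{eq:def-hat-c}, each of modulus at most $\sqrt{2}\mathcal{F}_{d_l}$. A union bound over the grid then shows that Eq.~\eqref{eq:cdf-approximation-bound} holds simultaneously at all these points with probability at least $1-\vartheta_l$. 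We use Eq.~\eqref{eq:tildeC-bound} with $d_l=g(\delta_l,\zeta)$ for the bias part.

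A second union bound over rounds gives total failure probability at most $\sum_l 6\vartheta/(\pi^2l^2)=\vartheta$. On the resulting good event, every round is correct. If the algorithm returns at round $l$, then $\lambda_*\ge 2\delta_l$ satisfies $\hat{C}_{d_l}(\lambda_*+\delta_l)\le\eta-\zeta-\xi$, so Eq.~\eqref{eq:certification-cond} gives $C(\lambda_*)\le\eta$. Since $C$ is monotone, $C(2\delta_l)\le C(\lambda_*)\le\eta$, so $\hat{\lambda}_\eta=2\delta_l$ is certified.

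\textbf{Per-round cost.} One term in Eq.~\eqref{eq:def-hat-c} needs a Hadamard test with controlled $e^{-iHJ}$, so its simulation time is $|J|\le d_l=g(\delta_l,\eta/8)=\mathcal{O}(2^{l}\log\eta^{-1})$. If the algorithm stops at round $L$, then $\hat{\lambda}_\eta=2^{1-L}$. Every executed circuit therefore uses time at most $d_L=\tilde{\mathcal{O}}(\hat{\lambda}_\eta^{-1})$.

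For the total we need $\mathcal{F}_d=\mathcal{O}(\log d)$, which is the standard bound for the Fourier coefficients of the smoothed Heaviside function, $|F_{j,d}|=\mathcal{O}(1/|j|)$ (Appendix~\ref{appendix:fourier-heaviside}). Granting this, $N_{d_l}=\tilde{\mathcal{O}}(\eta^{-2}\log(l\,2^l/\vartheta))$. Bounding the time per shot by $d_l$ makes round $l$ cost $\tilde{\mathcal{O}}(2^l\eta^{-2}\log\vartheta^{-1})$. Summing the geometric series up to $L$ gives $\tilde{\mathcal{O}}(2^L\eta^{-2}\log\vartheta^{-1})=\tilde{\mathcal{O}}(\hat{\lambda}_\eta^{-1}\eta^{-2}\log\vartheta^{-1})$. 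The factor $\log(l2^l)=\mathcal{O}(\log\hat{\lambda}_\eta^{-1})$ is absorbed into the tilde.

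\textbf{The case $H=\rho$.} The condition $0\le\rho\le I$ holds, so the above applies. Each circuit is an interleaving of at most $\mathcal{O}(d_L)$ unit-time evolutions $e^{-i\rho}$ with $\rho$-independent gates; controlled versions follow by standard constructions. The virtual DME of Ref.~\cite{wada2025state} simulates such a circuit to diamond error $\varepsilon_\diamond$ using $\mathcal{O}(M^2\log(M/\varepsilon_\diamond))$ copies, with $M=\tilde{\mathcal{O}}(\hat{\lambda}_\eta^{-1})$. We choose $\varepsilon_\diamond=\Theta(\xi/\mathcal{F}_d)$, or split $\xi$, so that the extra bias in each $\mu_j$ enlarges the error only by a constant and is absorbed into the constants of $\xi,\zeta=\eta/8$. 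This gives $\tilde{\mathcal{O}}(\hat{\lambda}_\eta^{-2})$ copies per circuit. Multiplying by $N_{d_l}$ and summing the geometric series in $4^l$ yields $\tilde{\mathcal{O}}(\hat{\lambda}_\eta^{-2}\eta^{-2}\log\vartheta^{-1})$ total copies.

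The main obstacle is the accounting in this last step. The DME bias must be added consistently to the certification inequality, Eq.~\eqref{eq:certification-cond}, without breaking the guarantee $C(\hat{\lambda}_\eta)\le\eta$. The concentration bound must also hold for the complex, symmetrized estimator.
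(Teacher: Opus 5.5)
Your proposal is correct and follows the paper's proof. It uses the same steps: a Hoeffding bound with a union bound over the grid in each round, the $6\vartheta/(\pi^2 l^2)$ failure allocation across rounds, and a geometric sum up to the terminal round $L$ with $\hat{\lambda}_\eta = 2\delta_L$, converted to copies via virtual DME. The differences are minor: you bound each shot by $d_l$ (resp.\ $\tilde{\mathcal{O}}(d_l^2)$ copies) instead of computing $N_{d_l}\mathbb{E}[|J|]$ and $N_{d_l}\mathbb{E}[|J|^2]$ via $|F_{j,d}|=\mathcal{O}(1/j)$, which costs only an extra $\log d_l$ factor, and you are more explicit than the paper about the monotonicity step $C(2\delta_l)\le C(\lambda_*)$ and about taking $\varepsilon_\diamond=\Theta(\xi/\mathcal{F}_d)$ so the DME bias fits inside the certification margin.
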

\begin{proof}
    Algorithm~\ref{alg-search} iteratively constructs $\hat{C}_{d_l}(x)$ with decreasing $\delta_{l}=2^{-l}$, or equivalently increasing $d_l = g(\delta_l, \zeta)$.
    We set parameter $\xi = \zeta = \eta/8$ globally in iterations by the requirement of Proposition~\ref{prop:termination}.
    
    For each $l$, given $\delta_l$, $\vartheta_{l}$ and $d_l$, we can estimate $\hat{C}_{d_l} (x)$ satisfying Eq.~\eqref{eq:cdf-approximation-bound} on single grid point using $\tilde{\mathcal{O}}(\mathcal{F}_{d_l}^2 \xi^{-2} \log(\vartheta_l^{-1}))$ shots.
    For the evaluation on all grid points on $\chi_{\delta_l}$
    with a success probability at least $1-\vartheta_{l}$,
    \begin{equation}
        N_{d_l} =\tilde{\mathcal{O}}(\mathcal{F}_{d_l}^2 \xi^{-2} \log(|\chi_{\delta_l}|\vartheta_l^{-1}))    
    \end{equation}
    is required.
    Thus, by evaluating the certification condition~\eqref{eq:certification-cond} on all grid points,  we can find a candidate $\lambda_*$ satisfying $C(\lambda_*) \le \eta$.

    The algorithm terminates when the condition $\lambda_* \ge 2\delta_l$ is met, meaning that, for an unknown $L$ at the termination,
    $\hat{\lambda}_\eta = 2\delta_L$ holds.
    Although the terminal iteration $L$ is not known in advance, we can allocate the failure probability over all possible iterations by setting
    \begin{equation}
        \vartheta_l = \frac{6\vartheta}{\pi^2 l^2}.
    \end{equation}
    This ensures that the total failure probability is at most $\vartheta$ since $\sum_{l=1}^\infty \vartheta_l\le \vartheta$.
    Therefore, we obtain $\hat{\lambda}_\eta$ with a success probability at least $1-\vartheta$.
    Note that the algorithm must terminate, i.e., a finite $L$ exists; see Proposition~\ref{prop:termination}.
    
    Next, we summarize the complexities.
    The maximum simulation time in a single circuit is given by $d_L = \tilde{\mathcal{O}}(\delta_L^{-1})$.
    For a fixed $l$, the total simulation time is given by
    \begin{align}
        N_{d_l}\mathbb{E}[|J|] &=
        \tilde{\mathcal{O}}\!\left(\mathcal{F}_{d_l} \xi^{-2} \log(|\chi_{\delta_l}|\vartheta_l^{-1})\sum_{|j|\le d_l} |F_{j,d_l}||j|\right)\notag \\
        &= \tilde{\mathcal{O}}(d_l\xi^{-2} \log(d_l)\log(l\delta_l^{-1}\vartheta^{-1}))) \notag \\ &=\tilde{\mathcal{O}}(\delta_l^{-1} \eta^{-2}\log(\vartheta^{-1})),
    \end{align}
    where we use $d_l = \tilde
    {\mathcal{O}}(\delta_l^{-1})$, $|F_{j,d_l}| = \mathcal{O}(1/j)$ and $\mathcal{F}_{d_l} = \mathcal{O}(\log(d_l))$~\cite{lin2022heisenberg, wan2022randomized}.
    Hence, the total simulation time is 
    \begin{align}
        \sum_{l=1}^{L}N_{d_l}\mathbb{E}[|J|] &=\tilde{\mathcal{O}}(\delta_L^{-1} \eta^{-2}\log(\vartheta^{-1}))\notag\\
        &=\tilde{\mathcal{O}}(\hat{\lambda}_\eta^{-1} \eta^{-2}\log(\vartheta^{-1})),
    \end{align}
    where we take $\delta_l$ as $\delta_l = 2^{-l}$ and we have $\delta_L^{-1}  = \Theta(\hat{\lambda}_\eta^{-1})$ since $2\delta_L  = \hat{\lambda}_\eta$.

    If $H = \rho$, we utilize the virtual DME algorithm, which consumes $\tilde{\mathcal{O}}(T^2 \log(T/\varepsilon_\diamond))$ copies of $\rho$ for simulation up to time $T$ with error $\varepsilon_\diamond$~\cite{wada2025state}.
    By taking $\varepsilon_{\diamond} = \mathcal{O}(\eta)$, the maximum simulation time $d_L$ is converted to a maximum copy requirement of $\tilde{\mathcal{O}}(d^2_L) = \tilde{\mathcal{O}}(\hat{\lambda}_\eta^{-2})$ per 
    circuit.
    Furthermore, the total copy requirement for a fixed $l$ stage is  
    \begin{align}
        N_{d_l}\mathbb{E}[|J|^2] 
        &= \mathcal{O}(d_l^2\xi^{-2} \log(d_l)\log(l\delta_l^{-1}\vartheta^{-1}))) \notag \\ &=\tilde{\mathcal{O}}(\delta_l^{-2} \eta^{-2}\log(\vartheta^{-1})).
    \end{align}
    Summing up to $L$, the total copy requirement is 
    \begin{align}
        \sum_{l=1}^{L}N_{d_l}\mathbb{E}[|J|^2]
        &=\tilde{\mathcal{O}}(\delta_L^{-2} \eta^{-2}\log(\vartheta^{-1})) \notag \\
        &=\tilde{\mathcal{O}}(\hat{\lambda}_\eta^{-2} \eta^{-2}\log(\vartheta^{-1})).
    \end{align}
\end{proof}

\begin{proposition}[Termination guarantee and the lower bound of cutoff]\label{prop:termination}

    Let $\rho$ be a quantum state, and let $H$ be a Hamiltonian satisfying the assumptions in Theorem~\ref{thm:search}.
    Assume the parameters $\zeta$ and $\xi$ satisfy $(\zeta + \xi) \le \eta/4$.
    Then, with a success probability at least $1-\vartheta$, Algorithm~\ref{alg-search} terminates before the resolution $\delta$ becomes below $ \lambda_{\eta/2}/8$.
    Moreover, the returned cutoff $\hat{\lambda}_\eta$ satisfies
    \begin{equation}
        \hat{\lambda}_{\eta}  \ge \frac{\lambda_{\eta/2}}{4},
    \end{equation}
    where $\lambda_{\eta/2}$ denotes 
    the boundary $(\eta/2, C)$-spectral cutoff defined in Eq.~\eqref{eq:def-cutoff}.

    Moreover, if $H= \rho$, Algorithm~\ref{alg-search} outputs $\hat{\lambda}_\eta$ satisfying
    \begin{equation}
        \hat{\lambda}_{\eta} \ge 
        \max \left\{\frac{1}{4} \lambda_{\min}, \frac{\eta}{8r} \right\},
    \end{equation}
    where $\lambda_{\min} > 0$ is the minimum non-zero eigenvalue and $r$ is the rank of $\rho$. 
\end{proposition}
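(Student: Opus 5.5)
We work throughout on the success event of Algorithm~\ref{alg-search}: for every iteration $l$ and every grid point $x\in\chi_{\delta_l}$, the bound \eqref{eq:cdf-approximation-bound} holds with $\delta=\delta_l$. By the union bound with $\vartheta_l = 6\vartheta/(\pi^2 l^2)$, as in the proof of Theorem~\ref{thm:search}, this event has probability at least $1-\vartheta$. On this event, every returned $\lambda_*$ is certified by \eqref{eq:certification-cond}. The remaining task is deterministic: show that once $\delta_l$ is small compared with $\lambda_{\eta/2}$, the grid contains a point $x\ge 2\delta_l$ that passes the test $\hat{C}_{d_l}(x+\delta_l)\le \eta-\zeta-\xi$.

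The key step uses the upper half of \eqref{eq:cdf-approximation-bound}, which gives $\hat{C}_{d_l}(x+\delta_l)\le C(x+2\delta_l)+\zeta+\xi$. The test therefore passes whenever
\[
C(x+2\delta_l)\le \eta-2(\zeta+\xi).
\]
Since $\zeta+\xi\le\eta/4$, it suffices that $C(x+2\delta_l)\le \eta/2$. By the definition \eqref{eq:def-cutoff} of $\lambda_{\eta/2}$ as a supremum, and since $C$ is nondecreasing, we have $C(y)\le\eta/2$ for every $y<\lambda_{\eta/2}$. (The sup need not be attained, so we take $y$ strictly below it.) Now suppose $\delta_l\le\lambda_{\eta/2}/8$. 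We pick the largest grid point $x\in\chi_{\delta_l}$ with $x+2\delta_l<\lambda_{\eta/2}$. The grid spacing is $\delta_l$, so
\[
x\ge \lambda_{\eta/2}-3\delta_l\ge 5\delta_l\ge 2\delta_l.
\]
We also need $x\le\|H\|$ so that $x$ lies in the grid; this holds because $\lambda_{\eta/2}\le 1$ and $\|H\|$ is at least the largest eigenvalue carrying weight. I expect this edge case, where $\|H\|$ could be smaller than $\lambda_{\eta/2}$, to be the main nuisance. I would handle it by noting that $C(y)=1>\eta/2$ for $y\ge\|H\|$ whenever $\eta<2$, which forces $\lambda_{\eta/2}\le\|H\|$. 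Hence $\lambda_*\ge x\ge 2\delta_l$ and the algorithm stops at this iteration or earlier.

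For the bound on the output, let $l_0$ be the first index with $\delta_{l_0}\le\lambda_{\eta/2}/8$. Because the resolutions halve at each step, $\delta_{l_0}>\lambda_{\eta/2}/16$ (if $l_0=1$, then $\delta_1=1/2\ge\lambda_{\eta/2}/8$ holds trivially). Termination happens at some $L\le l_0$, so
\[
\hat{\lambda}_\eta=2\delta_L\ge 2\delta_{l_0}>\lambda_{\eta/2}/8.
\]
This is weaker than the claimed $\lambda_{\eta/2}/4$ by a factor of $2$. To reach $1/4$ I would sharpen the termination step: only $x\ge 2\delta_l$ together with $x+2\delta_l<\lambda_{\eta/2}$ is needed, so $\delta_l<\lambda_{\eta/2}/5$ already suffices (up to grid rounding). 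Taking $l_0$ to be the first index with $\delta_{l_0}\le\lambda_{\eta/2}/4$ then gives $\delta_{l_0}>\lambda_{\eta/2}/8$ and hence $\hat{\lambda}_\eta\ge\lambda_{\eta/2}/4$. Checking these constants against the grid rounding is the fiddly point, and I would adjust the termination threshold accordingly.

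For $H=\rho$, $C$ is the eigenvalue CDF of $\rho$, so it suffices to lower bound $\lambda_{\eta/2}$. First, $C(y)=0$ for $y<\lambda_{\min}$, so $\lambda_{\eta/2}\ge\lambda_{\min}$. Second, at most $r$ eigenvalues are nonzero, so $C(y)\le r y\le\eta/2$ for $y\le\eta/(2r)$, giving $\lambda_{\eta/2}\ge\eta/(2r)$. Substituting into $\hat{\lambda}_\eta\ge\lambda_{\eta/2}/4$ yields
\[
\hat{\lambda}_\eta\ge\max\left\{\frac{\lambda_{\min}}{4},\ \frac{\eta}{8r}\right\}.
\]
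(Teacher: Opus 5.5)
Your strategy matches the paper's: you work on the union-bound success event and apply the upper half of \eqref{eq:cdf-approximation-bound} at a test point. You absorb the margins via $\zeta+\xi\le\eta/4$, so that $C(\cdot)\le\eta/2$ suffices, and use monotonicity of $C$ with a strict inequality below the (possibly unattained) supremum $\lambda_{\eta/2}$. For $H=\rho$ you lower-bound $\lambda_{\eta/2}$ by $\lambda_{\min}$ and $\eta/(2r)$. Your check that $\lambda_{\eta/2}\le\|H\|$, so the test point lies in $\chi_{\delta_l}$, is a detail the paper leaves implicit.

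The gap is in obtaining the stated constants. Your rigorous argument (largest grid point below $\lambda_{\eta/2}-2\delta_l$, with $\delta_l\le\lambda_{\eta/2}/8$) only gives termination at some $\delta_L>\lambda_{\eta/2}/16$ and $\hat{\lambda}_\eta>\lambda_{\eta/2}/8$. Both claims of the proposition are stronger: the algorithm stops before $\delta$ drops below $\lambda_{\eta/2}/8$, and $\hat{\lambda}_\eta\ge\lambda_{\eta/2}/4$.

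Your sharpening is not consistent with what you proved:
\begin{itemize}
\item You derive that $\delta_l<\lambda_{\eta/2}/5$ suffices, but then use the first index with $\delta_{l_0}\le\lambda_{\eta/2}/4$. Iterations with $\delta_{l_0}\in(\lambda_{\eta/2}/5,\lambda_{\eta/2}/4]$ are therefore not covered.
\item At the endpoint $\delta_{l_0}=\lambda_{\eta/2}/4$, termination at that iteration can genuinely fail. There $x+2\delta_{l_0}=\lambda_{\eta/2}$, and $C(\lambda_{\eta/2})$ may exceed $\eta/2$ when the supremum is not attained, which is exactly the caveat you raised yourself.
\item Adjusting the algorithm's termination threshold, as you suggest, would prove a statement about a different algorithm.
\end{itemize}

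The missing observation is that you do not need the largest admissible grid point. The fixed point $x=2\delta_l$ is always in $\chi_{\delta_l}$ once $\|H\|\ge\lambda_{\eta/2}>4\delta_l$, and it is automatically $\ge 2\delta_l$. It passes the test whenever $4\delta_l<\lambda_{\eta/2}$, because $\hat{C}_{d_l}(3\delta_l)\le C(4\delta_l)+\zeta+\xi\le\eta/2+\zeta+\xi\le\eta-\zeta-\xi$.

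Take $\delta_{\mathrm{test}}$ to be the unique dyadic $2^{-l}$ in the half-open interval $[\lambda_{\eta/2}/8,\lambda_{\eta/2}/4)$, as the paper does. Then the algorithm terminates at some $\delta_L\ge\delta_{\mathrm{test}}\ge\lambda_{\eta/2}/8$, so $\hat{\lambda}_\eta=2\delta_L\ge\lambda_{\eta/2}/4$, with no grid-rounding loss. Your $H=\rho$ step then goes through unchanged.
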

\begin{proof}
    We show that there exists a dyadic resolution $\delta_{\text{ test}}= 2^{-l} $ for a positive integer $l$ with
    $\lambda_{\eta/2}/8\le \delta_{\rm test}<\lambda_{\eta/2}/4$
    at which the stopping condition is guaranteed to hold. The algorithm may stop earlier, but it cannot continue beyond this resolution.

    Consider the test point $x_{\text{test}} = 2\delta_{\text{test}}$.
    Since $x_{\rm test}+2\delta_{\rm test}<\lambda_{\eta/2}$, the monotonicity of $C$ and the definition of $\lambda_{\eta/2}$ as a supremum imply $C(x_{\rm test}+2\delta_{\rm test})\le \eta/2$.
    Therefore, the approximate CDF is bounded as
    \begin{align}
        \hat{C}_d(x_{\text{test}} + \delta_{\text{test}})
    &\le C(x_{\text{test}} + 2\delta_{\text{test}}) + \zeta + \xi \notag \\
    &\le \frac{\eta}{2} +\zeta + \xi \le \eta - \zeta - \xi,
    \end{align}
    where the last inequality follows from the assumption $\zeta+\xi\le\eta/4$.
    Thus, $x_{\text{test}}$ is a valid candidate of the search.
    Since $\lambda_*$ is chosen as the largest valid candidate, $\lambda_*$ satisfies the termination condition $\lambda_* \ge x_{\text{test}} =  2\delta_{\text{test}}$.
    Hence, the algorithm must terminate no later than the iteration with resolution $\delta_{\text{test}}$.
    If the algorithm terminates before reaching $\delta_{\text{test}}$, the stopping condition implies that the returned cutoff is at least $2\delta_{\text{test}}$.
    Otherwise, at the iteration with resolution $\delta_{\text{test}}$, we have $\hat{\lambda}_{\eta} = 2\delta_{\text{test}} \ge \lambda_{\eta/2} /4$.
    
    In particular, if $H = \rho$, we have 
    $\lambda_{\eta/2} \ge \lambda_{\min}$ and $\lambda_{\eta/2} \ge \eta/(2r)$.
    Therefore, the worst-case cutoffs are obtained.
\end{proof}

Proposition~\ref{prop:termination} clarifies the quality of the certified cutoff returned by Algorithm~\ref{alg-search}.
The algorithm output $\hat{\lambda}_\eta$ is a certified cutoff, and is not necessarily the tightest boundary cutoff; that is, $\hat{\lambda}_\eta = \lambda_{\eta}$ may not hold in general.
This gap comes from the search threshold with a margin for the polynomial approximation error $\zeta$ and statistical error $\xi$.
Nevertheless, Proposition~\ref{prop:termination} guarantees the upper and lower bounds:
\begin{equation}\
    \frac{\lambda_{\eta/2}}{4} \le \hat{\lambda}_\eta \le  \lambda_{\eta}. 
\end{equation}
Moreover, for the primary case $H=\rho$, the lower bound recovers the conventional worst-case cutoff based on $\lambda_{\min}$ and $\eta/r$ up to constants. 
This means that Algorithm~\ref{alg-search} can find a larger cutoff than the conventional worst-case, even if we do not know $\lambda_{\min}$ or $\eta/r$. 

In addition, the final CDF estimate provides a posteriori certificate on the tightness of the returned cutoff $\hat{\lambda}_{\eta}$.
According to Eq.~\eqref{eq:cdf-approximation-bound}, the true CDF $C(\hat{\lambda}_\eta)$ at the cutoff $\hat{\lambda}_{\eta}$ is bounded by $\eta' \le C(\hat{\lambda}_\eta) \le \eta$, where $\eta' = \max\{\hat{C}_d(\hat{\lambda}_\eta - \delta) - \zeta - \xi, 0\}$.
Hence, after the search, one can assess how sharp the returned cutoff is by evaluating the lower certificate $\eta'$.

\section{Estimation for nonlinear properties: application to  von Neumann entropy estimation}
\label{sec:reformulation}

In this section, we apply the two-stage framework to a nonlinear property $\tr[f(\rho) \rho]$, specifically von Neumann entropy $S(\rho)=-\tr[\rho \log(\rho)]$, in which case the target function is $f(x) = \log(1/x)$. 
We particularly demonstrate an algorithm flow discussed in Section~\ref{sec:two-stage} composed of preparation, Stage~1, and Stage~2.

First, in the preparation stage, we determine the algorithm parameters, which are the inputs to both Stage~1 and Stage~2. 
In particular, the first goal is to determine a polynomial $G$ and a bounding function $P_{\mathrm{vN}} (\delta)$ to bound the tail-induced error formed as  Eq.~\eqref{eq:error_decomposition}. 
From Ref.~\cite[Lemma 11]{gilyen2019-world}, there exists an even real polynomial 
$G(x)$ of degree $D = \mathcal{O}\left(\delta^{-1} \log\left(1/\varepsilon_\mathrm{poly} \right)\right)$, which provides an approximation of a target function $f(x) = \log(x^{
-1})$ such that
\begin{equation}
\label{eq:log-polynomial}
    \left|G(x) -\frac{\log(x^{-1})}{2\log (2\pi\delta^{-1})} \right| \leq \varepsilon_\mathrm{poly}, 
\end{equation}
for all $x\in \left[\delta/\pi, 1\right]$ and $\left|G(x)\right|\le 1,~\forall x\in[-1,1]$
with $\delta, \varepsilon_\mathrm{poly} >0$. 
We determine $\alpha_F = 2\log (2\pi \delta^{-1})$ and $F(x) = \alpha_F G(x/\pi) - \log(\pi)$. Then, Eq.~\eqref{eq:error_decomposition} is modified to:
\begin{eqnarray}
& & \hspace*{-0.8em}
   \left| \tr [F(\rho) \rho ] + \tr [\log(\rho)\rho]\right|
\nonumber \\
& & \hspace*{0.8em} 
    \le 
    \alpha_F\varepsilon_{\mathrm{poly}} + (\alpha_F + \log (\pi)) C(\delta) - \tr[\rho \log(\rho) \Pi_{\rho \le \delta}]
\nonumber \\
& & \hspace*{0.8em}
    =: \alpha_F\varepsilon_{\mathrm{poly}} + \varepsilon_{\mathrm{tail}}(\delta).
\nonumber
\end{eqnarray} 
The last term of the error can be bounded by
\begin{eqnarray}
        & & \hspace*{-1em} - \tr[\rho \log(\rho) \Pi_{\rho \le \delta}] \notag \\
        & & \hspace*{-0.8em} 
          = - C(\delta)\sum_{\lambda_i \leq \delta}  \frac{\lambda_i}{C(\delta)} \left(\log\left( \frac{\lambda_i}{C(\delta)} \right) + \log\left( C(\delta) \right) \right) \notag \\
        & & \hspace*{-0.8em} 
          = - C(\delta)\sum_{i} q_i \Big( \log(q_i) + \log\left( C(\delta) \right) \Big) \notag \\
        & & \hspace*{-0.8em} \leq  C(\delta) \log\big( N/C(\delta) \big),
\end{eqnarray} 
where $q_i = \lambda_i/C(\delta)$ is a distribution satisfying $\sum_i q_i = 1$,
and we use the fact that the entropy $-\sum_i q_i \log(q_i)$ is bounded by $\log(N)$. 
Thus, the tail-induced error $\varepsilon_{\mathrm{tail}}(\delta)$ is bounded using the bounding function $P_{\mathrm{vN}}(\delta)$ as $\varepsilon_{\mathrm{tail}}(\delta) \le (\alpha_F+\log(\pi)) C(\delta) + C(\delta) \log( N/ C(\delta)) =: P_{\mathrm{vN}}(\delta)$.

Since $P_{\mathrm{vN}}(\delta)$ is a complicated function of $C(\delta)$, finding $(\varepsilon, P_{\mathrm{vN}})$-spectral cutoff is nontrivial. 
Hence, below we prove that $C(\delta) \le \eta$ is a sufficient condition for $P_{\mathrm{vN}}(\delta) \le \eta'$ to hold, meaning that we can use $P(\delta)=C(\delta)$ rather than $P_{\mathrm{vN}}(\delta)$. 
This allows us to use Algorithm~\ref{alg-search} and
find the simplest $(\eta, C)$-spectral cutoff. 
To prove this fact of sufficiency, we begin with the observation that $\eta$ satisfying $\eta \log(N/\eta) = \eta'$ for a given $\eta' \in (0,1)$ is expressed by the lower branch of Lambert $W$ function, $W_{\!-1}$, as 
    \begin{equation}
    \label{Eq W -1}
        \eta = \frac{\eta'}{-W_{\!-1}\left( -\eta'/N \right)}.
    \end{equation}
    Then, if $\delta$ satisfies $C(\delta) \le \eta$, i.e., $\delta$ is a $(\eta, C)$-spectral cutoff, we have 
    \begin{equation}
        C(\delta) \log\left(\frac{N}{ C(\delta)} \right) \le 
        \eta \log\left(\frac{N}{ \eta} \right) \le 
        \eta',
    \end{equation}
    since $x \log(N/x)$ is increasing in $x \le N/e$ and we assume $N \gg e$.
    When we choose $\delta \ge \eta/N$ since $C(\eta/N) \le \eta $, we have 
    \begin{align}
        \alpha_F+\log(\pi)
        &\le 
        2\log \left(\frac{2\pi N}{\eta}\right) + \log (\pi) \notag \\
        &\le 2\eta'/\eta + 2\log (2\pi)+ \log (\pi).
    \end{align}
    Then 
    \begin{align}
        (\alpha_F+\log(\pi)) \eta &\le 2 \eta' + ( 2\log (2\pi)+ \log (\pi)) \eta \notag \\
        & \le 2 \eta' +  \frac{2\log (2\pi)+ \log (\pi)}{\log(N/\eta')} \eta'
        \le 7 \eta', \nonumber
    \end{align}
    is obtained where we used Eq.~\eqref{Eq W -1} and 
    $-W_{\!-1}\left( -\eta'/N \right) \ge \log(N/\eta')$ if $\eta' < N /e$ at the second inequality.
    Hence, for a $(\eta, C)$-cutoff $\delta$, the error bound $P_{\mathrm{vN}}(\delta) \le 8\eta'$ holds.

    We now have the bounding function $P(\delta)=C(\delta)$, and we can further determine the parameters. 
    For estimating von Neumann entropy within the additive error $\varepsilon$,
    the approximation error needs to be bounded by 
    $\alpha_F\varepsilon_{\mathrm{poly}} + \varepsilon_{\mathrm{tail}}(\delta) \leq \alpha_F\varepsilon_{\mathrm{poly}} + 8\eta' = \mathcal{O}(\varepsilon)$.
    Thus, we choose $\varepsilon_{\mathrm{poly}} = \mathcal{O}(\varepsilon/\alpha_F)$
    and $\eta' = \mathcal{O}(\varepsilon)$, or equivalently $\eta = \tilde{\mathcal{O}}(\varepsilon/\log(N))$.

Using the parameters determined above, we now proceed to the Stage~1 and Stage~2 
to estimate von Neumann entropy. 
In particular, the necessary number of state copies for this estimation is revealed as follows.

\begin{lemma}[von Neumann entropy estimation]\label{lemma:vn}
    Let $\rho$ be an $N$-dimensional quantum state.
    For $\varepsilon > 0$ and $\vartheta > 0$, von Neumann entropy $S(\rho) = -\tr[\rho \log (\rho)]$ can be estimated within additive error $\varepsilon$ with a success probability at least $1-\vartheta$, using 
    $\tilde{\mathcal{O}} (\hat{\lambda}_{\eta}^{-2} \varepsilon^{-2}\log(\vartheta^{-1}))$ copies of $\rho$.
\end{lemma}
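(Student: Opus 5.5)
The plan is to assemble the two-stage pipeline of Section~\ref{sec:two-stage} with the parameters fixed in the preparation discussion above, and to split both the failure probability and the additive error between the two stages. First, in the preparation step I would set $\eta' = \varepsilon/16$ and choose $\eta$ through Eq.~\eqref{Eq W -1}, so that $\eta = \tilde{\Theta}(\varepsilon/\log N)$ and $\eta^{-2} = \tilde{\mathcal{O}}(\varepsilon^{-2})$; the $\log N$ factor is absorbed into the tilde. Then I would run Algorithm~\ref{alg-search} with $H=\rho$, with target $\eta$ and failure probability $\vartheta/2$. By Theorem~\ref{thm:search}, with probability at least $1-\vartheta/2$ this returns a certified cutoff $\hat{\lambda}_\eta$ with $C(\hat{\lambda}_\eta)\le \eta$, and it uses $\tilde{\mathcal{O}}(\hat{\lambda}_\eta^{-2}\eta^{-2}\log(\vartheta^{-1})) = \tilde{\mathcal{O}}(\hat{\lambda}_\eta^{-2}\varepsilon^{-2}\log(\vartheta^{-1}))$ copies. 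Termination, and hence finiteness of $\hat{\lambda}_\eta^{-1}$, is guaranteed by Proposition~\ref{prop:termination}.

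In Stage~2 I would take $\delta = \hat{\lambda}_\eta$ in the polynomial of Eq.~\eqref{eq:log-polynomial}, giving $\alpha_F = 2\log(2\pi/\hat{\lambda}_\eta)$, and set $\varepsilon_{\rm poly} = \varepsilon/(4\alpha_F)$. The degree is then $D = \mathcal{O}(\hat{\lambda}_\eta^{-1}\log(\alpha_F/\varepsilon)) = \tilde{\mathcal{O}}(\hat{\lambda}_\eta^{-1})$. The bias decomposition above gives the bound $\alpha_F\varepsilon_{\rm poly} + P_{\rm vN}(\hat{\lambda}_\eta)$.

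The main obstacle is that the sufficiency argument $P_{\rm vN}(\delta)\le 8\eta'$ used $\delta \ge \eta/N$ to control $\alpha_F$. For this, I would invoke Proposition~\ref{prop:termination}: with $H=\rho$, it gives $\hat{\lambda}_\eta \ge \eta/(8r) \ge \eta/(8N)$. This only changes the constant inside $\alpha_F$ by an additive $\mathcal{O}(1)$, so the previous chain still yields $P_{\rm vN}(\hat{\lambda}_\eta) = \mathcal{O}(\eta')$. By adjusting the constant in $\eta'$, the total bias is at most $\varepsilon/2$. The same lower bound also shows that $\alpha_F = \mathcal{O}(\log(N/\varepsilon))$ is polylogarithmic.

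Finally, I would apply Lemma~\ref{lemma:virtual-qsvt} with accuracy $\varepsilon/2$ and failure probability $\vartheta/2$. This gives $\tilde{\mathcal{O}}(D^2\alpha_F^2\varepsilon^{-2}\log(\vartheta^{-1})) = \tilde{\mathcal{O}}(\hat{\lambda}_\eta^{-2}\varepsilon^{-2}\log(\vartheta^{-1}))$ copies. A union bound gives overall success probability at least $1-\vartheta$ and total error at most $\varepsilon$. Summing the copy counts of the two stages gives the claim. One subtlety should be noted: the random $\hat{\lambda}_\eta$ is used to design $G$ in Stage~2. This is harmless, because Stage~2 uses fresh copies and its guarantee holds conditionally on any fixed cutoff.
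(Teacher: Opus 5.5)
Your proposal is correct and follows the paper's proof. You choose $\eta=\tilde{\mathcal{O}}(\varepsilon)$ with $P=C$, find $\hat{\lambda}_\eta$ via Theorem~\ref{thm:search}, build the degree-$\tilde{\mathcal{O}}(\hat{\lambda}_\eta^{-1})$ logarithm polynomial with polylogarithmic $\alpha_F$, and invoke Lemma~\ref{lemma:virtual-qsvt}, splitting $\vartheta/2$ per stage as in Section~\ref{sec:two-stage}. You also use Proposition~\ref{prop:termination} to get $\hat{\lambda}_\eta \ge \eta/(8r) \ge \eta/(8N)$, so the sufficiency bound $P_{\mathrm{vN}}(\hat{\lambda}_\eta)=\mathcal{O}(\eta')$, which assumed $\delta\ge\eta/N$, genuinely holds at $\delta=\hat{\lambda}_\eta$; the paper leaves this point implicit, and you handle it correctly.
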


\begin{proof}
    In the preparation stage, we have determined the parameters $\eta = \tilde{\mathcal{O}
    }(\varepsilon)$ and $P = C$. 
    In Stage~1, by Theorem~\ref{thm:search}, we find a certified $(\eta, C)$-spectral cutoff $\hat{\lambda}_{\eta}$ using $\tilde{\mathcal{O}} (\hat{\lambda}_{\eta}^{-2} \eta^{-2}\log(\vartheta^{-1})) = \tilde{\mathcal{O}} (\hat{\lambda}_{\eta}^{-2} \varepsilon^{-2}\log(\vartheta^{-1}))$ copies.
    In Stage~2, we determine the polynomial $G$ with the degree 
    $D = \mathcal{O}\!\left( \delta^{-1}\log(\varepsilon_{\mathrm{poly}}^{-1})\right)$
    $= \mathcal{O}\!\left(\hat{\lambda}_{\eta}^{-1}\log(\varepsilon_{\mathrm{poly}}^{-1})\right)
    =\tilde{\mathcal{O}}(\hat{\lambda}_{\eta}^{-1})$
    from the identified cutoff. 
    Now we apply Lemma~\ref{lemma:virtual-qsvt} to estimate $\tr[(\alpha_F G(\rho/\pi) - \log(\pi))\rho]$ that approximates $\tr[f(\rho)\rho]=S(\rho)$ with the target error $\mathcal{O}(\varepsilon)$, where in the above we found $\alpha_F = \mathcal{O}\!\left(\log(\hat{\lambda}_\eta^{-1})\right)$. 
    As a result, we obtain the copy requirements for this estimation: $\tilde{\mathcal{O}}(D^2) = \tilde{\mathcal{O}}(\hat{\lambda}_\eta^{-2})$ copies per circuit,
    $\tilde{\mathcal{O}}(\alpha_F^2\varepsilon^{-2}\log(\vartheta^{-1}))=
    \tilde{\mathcal{O}}(\varepsilon^{-2}\log(\vartheta^{-1}))$ measurement shots,
    and consequently $\tilde{\mathcal{O}}(\hat{\lambda}_\eta^{-2}\varepsilon^{-2}\log(\vartheta^{-1}))$ total copies for Stage~2. 
    Hence, the total copy requirement through Stage~1 and Stage~2 is also
    $\tilde{\mathcal{O}}(\hat{\lambda}_\eta^{-2}\varepsilon^{-2}\log(\vartheta^{-1}))$. 
\end{proof}

While our framework provides the copy requirement based on $\hat{\lambda}_\eta$, previous works mainly rely on prior knowledge on the rank $r$ of $\rho$. 
To be fairly compared to previous works, we derive the worst-case complexity based on $r$.
Since Proposition~\ref{prop:termination} guarantees the lower bound $\hat{\lambda}_{\eta} \ge \max\{\lambda_{\min}/4, \eta/(8r)\}$, by directly replacing $\hat{\lambda}_{\eta}$ with the lower bounds,
we obtain the total copy requirement $\tilde{\mathcal{O}}(\hat{\lambda}_\eta^{-2} \eta^{-2}) = \tilde{\mathcal{O}}(r^2\varepsilon^{-4})$ or $\tilde{\mathcal{O}}(\hat{\lambda}_{\min}^{-2}\varepsilon^{-2})$ in high probability. 
These are the scaling obtained in recent works~\cite{Kean2025-list, wada2025state}. 
Now, we obtain the results in Table~\ref{tab:total-complexity}.

\section{Impact of the adaptive cutoff}
\label{sec:case-study-demonstration}

In this section, we investigate the impact of introducing the $(\eta, P)$-spectral cutoff $\lambda_\eta$ and its certified version $\hat{\lambda}_\eta$ in the case $P=C$.
Recall that they are essential to guarantee the error to estimate a target quantity, while $\lambda_{\min}$ and $\eta/r$ often provide overly loose bounds for $\lambda_\eta$. 
In particular, we demonstrate how much $\hat{\lambda}_\eta$ is bigger than the lower bound $\lambda_{\min}$ and $\eta/r$ in specific instances of $\rho$. 
This analysis directly connects to the impact of reducing the polynomial degree for von Neumann entropy estimation. 
We first clarify this advantage through analytical examples and then demonstrate it numerically.

\subsection{Analytical examples}
\label{sec:case-study}

First, we analytically compare the three parameters: a boundary $(\eta, C)$-spectral cutoff $\lambda_\eta$, $\lambda_{\min}$, and $\eta/r$. 
Throughout the case studies in this subsection, we analyze the boundary cutoff $\lambda_\eta$ rather than the certified cutoff $\hat{\lambda}_\eta$, because $\lambda_\eta$ captures the intrinsic scaling determined by the spectrum and analytically.
To make the analysis consistent with a certified cutoff, we use a feasible value of $\delta$ that satisfies $C(\delta)\le \eta$ and then take the limit $\delta\to \lambda_\eta$.

Before providing the specific instances of $\rho$, we summarize the relations among the parameters:
\begin{gather}
\lambda_{\min} \le \lambda_{\eta},~~
 \frac{\eta}{r} \le \lambda_{\eta},\label{eq:lambda-lb-ineq}\\
  \lambda_{\eta} \le \frac{1}{r_\eta}, ~~
  \lambda_{\min} \le \frac{1}{r},
\label{eq:lambda-ub-ineq}
\end{gather}
where $r_\eta$ is an $\eta$-approximate rank defined as $r_\eta = \text{Rank}(\Pi_{\lambda>\lambda_\eta}\rho)$. 
The inequalities in Eq.~\eqref{eq:lambda-lb-ineq} follow directly from their definitions, and Eq.~\eqref{eq:lambda-ub-ineq} follow from $r \lambda_{\min} \le \sum_i \lambda_i=1$ and $r_{\eta} \lambda_{\eta}\le \sum_i \lambda_i \le 1$, where $\lambda_i$ are the non-zero eigenvalues. 
These inequalities highlight the critical advantage of using $\lambda_{\eta}$ over the conventional thresholds. 
Eq.~\eqref{eq:lambda-lb-ineq} shows that the lower bounds $\lambda_{\rm min}$ and $\eta/r$ are often excessively loose, as they assume a worst-case spectral distribution.
Eq.~\eqref{eq:lambda-ub-ineq} reveals an interesting difference  between $\lambda_{\min}$ and $\lambda_{\eta}$, conceptually positioning $\lambda_{\eta}$ as the ``smoothed minimum non-zero eigenvalue'' of $\rho$.
Actually, while the exact minimum non-zero eigenvalue $\lambda_{\min}$ is bounded by the exact rank $r$, $\lambda_{\eta}$ is allowed to increase up to $1/r_{\eta}\geq 1/r$.
Note that $\lambda_{\min}$ and $r$ are highly sensitive to even small noise; thus, for low-rank states subjected to small but high-rank noise, the $\lambda_{\eta}$-input method can provide an exponential improvement over the bounds dictated by $\lambda_{\min}$ or $r$.

To illustrate the above-described advantages, we examine the following three types of $\rho$.

\begin{table}[htbp]
    \centering
    \renewcommand{\arraystretch}{2}
    \begin{tabular}{l|ccccc}
        \hline \hline
        Instance & $\lambda_\eta$ &~~~& $\lambda_{\min}$ &~~~& $\eta/r$  \\
        \hline
        Flat spectrum  & $\dfrac{1}{r}$ &~~~& $\dfrac{1}{r}$ &~~~& $\dfrac{\eta}{r}$ \\
        Gapped spectrum & $\lambda_{\min}$ &~~~& $\lambda_{\min}$ &~~~& $\dfrac{\eta}{r}$ \\
        Noisy states & $\dfrac{1-\eta}{r_0}$ &~~~& $\dfrac{\eta}{r_1}$ &~~~& $\dfrac{\eta}{r_0 + r_1}$  \\[0.5em]
        \hline \hline
    \end{tabular}
    \renewcommand{\arraystretch}{1}
    \caption{The comparison of three parameters: $(\eta, C)$-spectral cutoff $\lambda_\eta$, $\lambda_{\min}$, and $\eta/r$ for the three examples of $\rho$. $\lambda_{\min}$, and $\eta/r$ provide the lower bound of $\lambda_\eta$.}
    \label{tab:case}
\end{table}

\textit{Flat spectrum:} a rank-$r$ quantum state satisfying
\begin{equation}
    \rho = \frac{1}{r}\sum_{i=1}^{r} \ketbra{\psi_i},
\end{equation} 
where all non-zero eigenvalues are exactly $1/r$.

\textit{Gapped spectrum:} a rank-$r$ quantum state satisfying
\begin{gather}
    \rho = \sum_{i=1}^{r} \lambda_i \ketbra{\psi_i}, \notag \\
    \lambda_1 \ge \lambda_{2} \ge \cdots \ge  \lambda_r  = \lambda_{\min} \ge \eta.
\end{gather}
This instance includes a wider class than the previous one.
For simplicity of the analysis, we assume the minimum eigenvalue is larger than the target value $\eta$.

\textit{Noisy state:} a composite state consisting of a rank-$r_0$ target signal and a rank-$r_1$ noisy component, represented as
\begin{gather}
    \rho = \frac{1-\eta}{r_0}\sum_{i=1}^{r_0} \ketbra{\psi_i} + \frac{\eta}{r_1}\sum_{i=1}^{r_1} \ketbra{\phi_i},
\end{gather}
with $(1-\eta)/r_0 \ge \eta/r_1$. In the case, $\lambda_\eta$ correctly identifies the minimum non-zero eigenvalue of the target state, $(1-\eta)/r_0$, whereas $\lambda_{\min}$ drops to the eigenvalue of the noisy state, $\eta / r_1$.

The resulting three parameters for each instance are summarized in Table~\ref{tab:case}.
In the case of the flat spectrum and the gapped spectrum, $\lambda_\eta$ improves over the worst-case bound $\eta/r$ by a factor of order $1/\eta$. 
Moreover, the $\lambda_\eta$ achieves a clear advantage under the noise. 
In particular, when $r_0 = \mathcal{O}(1)$ and $r_1 = \mathcal{O}(N)$, $\lambda_\eta$ presents the significant advantage over $\lambda_{\min}$ and $\eta/r$.

\subsection{Demonstration of the cutoff search}

In this subsection, we provide a numerical demonstration of Algorithm~\ref{alg-search}.

Suppose we have a rank-$r$ quantum state $\rho = \sum_{j=0}^{r-1} \lambda_j \ketbra{j}$ that encodes a classical distribution on the computational basis.
In particular, we employ a discretized Gaussian distribution $\lambda_j \propto \exp[- (j/ \Sigma)^2]$, where $j \in [0, r-1]$ and $\Sigma >0$. 
We evaluate the estimated CDF $\hat{C}_d(x)$, which approximates the target 
$C(x) = \tr[\rho \Pi_{\rho \leq x}]$, and the resulting estimated cutoff $\hat{\lambda}_\eta$.

Fig.~\ref{fig:cdf} depicts a search process of Algorithm~\ref{alg-search} with decreasing $\delta$ from (a) $2^{-4}$ to (d) $2^{-7}$.
The parameters used for this demonstration are summarized in Table~\ref{tab:numerical-params}~(a).
In Fig.~\ref{fig:cdf}, $\hat{C}$ (red line) and $C$ (gray line) denote
the estimated and the exact CDF, respectively,
while the dashed line represents the target value $\eta = 0.1$.
Also, we observe that the error bounds with $\pm(\zeta+ \xi)$ (red shaded areas). 
At the step (d), the algorithm successfully identifies the spectral cutoff $\hat{\lambda
}_\eta = 1.7 \times 10^{-2}$ while its upper bound is $\lambda
_\eta = 4.8 \times 10^{-2}$.

This example demonstrates the numerical impact of the introduction of the cutoff.
Our search algorithm determines $\hat{\lambda}_\eta= 1.7 \times 10^{-2}$.
By contrast, the conventional worst-case baselines would require the eigenvalue threshold by $\lambda_{\min} \le \exp(-(r-1)^2/\Sigma^2)\approx 10^{-18180}$
or $\eta /r = 1.0 \times 10^{-4}$.
Because the required polynomial degree scales as the inverse of the threshold, the introduction of $\hat{\lambda}_\eta$ reduces the required polynomial order by a factor of approximately $10^2$ compared to the conventional worst-case bound $\eta/r$.

\begin{figure*}[ht]
    \centering
    \begin{tabular}{cc} 
    \includegraphics[width=0.47\linewidth]{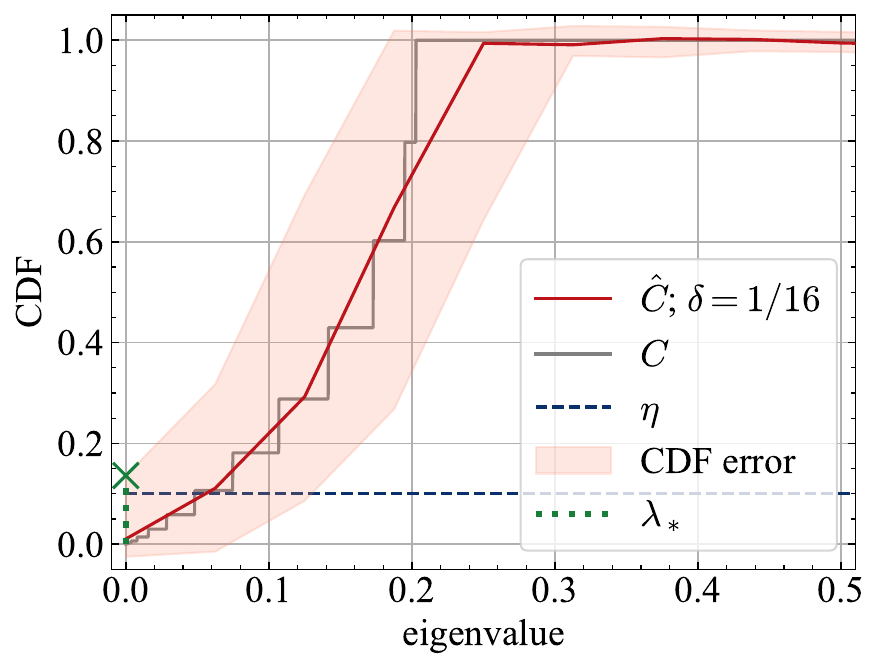}
    &\includegraphics[width=0.47\linewidth]{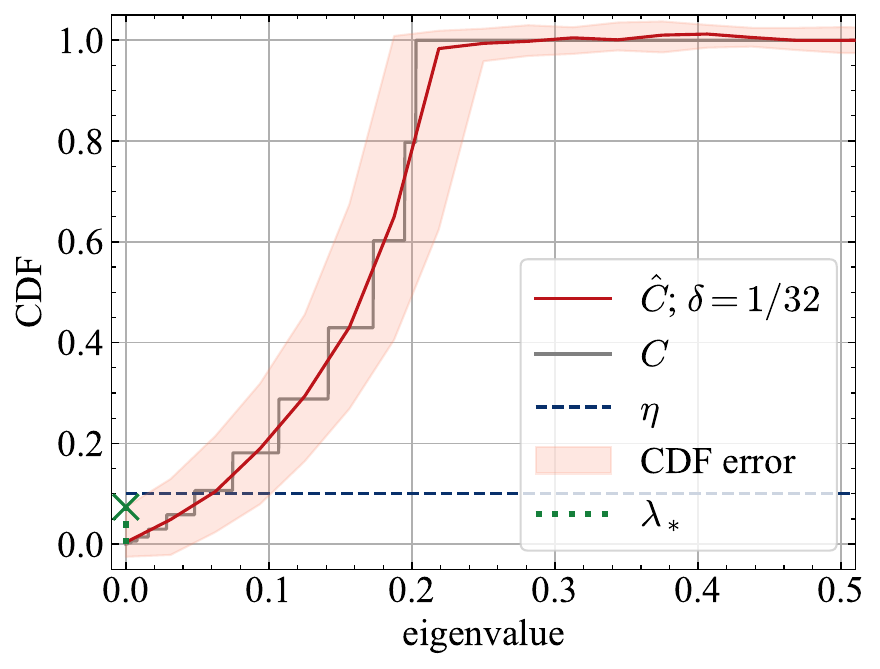}  \\
    (a) $\delta=2^{-4}$ & (b) $\delta=2^{-5}$\\
    \includegraphics[width=0.47\linewidth]{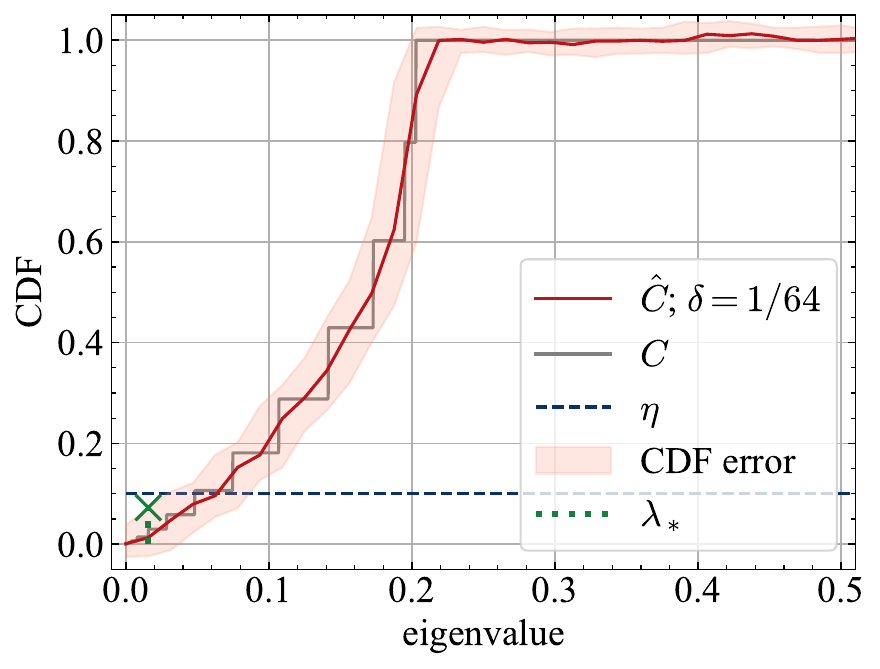}
    &\includegraphics[width=0.47\linewidth]{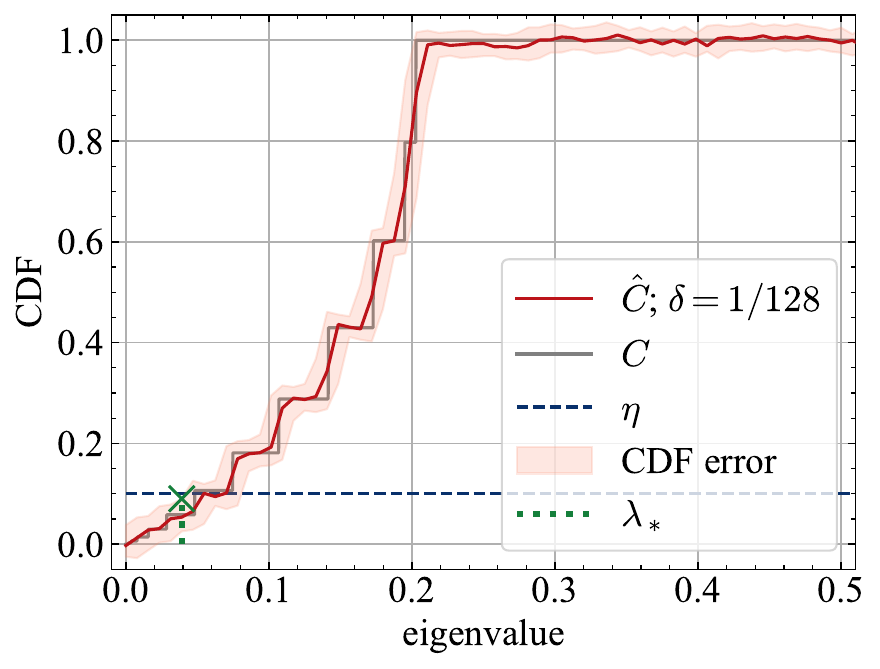}  \\
    (c) $\delta=2^{-6}$ & (d) $\delta=2^{-7}$ \\
    \end{tabular}
    \caption{The algorithm search steps to find $\lambda_\eta = 0.048$. 
    The flow progresses from (a) to (d) with gradually decreasing $\delta$.
    $\hat{C}$ shows the estimated CDF in red, and $C$ shows the exact CDF in gray.
    $\lambda_*$ identified by the algorithm is displayed in green.
    The red shaded areas display the CDF approximation error bound implied by Eq.~\eqref{eq:cdf-approximation-bound}.
    (a, b) There are no grid points satisfying Eq.~\eqref{eq:find-lambda-star}, and $\lambda_* = 0$ is returned. 
    (c) A non-zero $\lambda_*$ is found, but the algorithm continues since $\lambda_* < 2\delta_l$.
    (d) $\lambda_* \ge 2\delta$ is found, and the algorithm stops with the return $\hat{\lambda}_\eta = 1.7 \times 10^{-2}$.
    }
    \label{fig:cdf}
\end{figure*}

\begin{table}[ht]
    \centering
    \begin{tabular}{ccc}
        \begin{tabular}{c|c}\hline \hline
            parameters &  values\\ \hline
            $r$ &  1024 \\ 
            $\lambda_j$ & $\propto \exp[-j^2/\Sigma^2]$  \\
            $\Sigma$ & $5$  \\
            $\eta$ & $0.1$  \\
            \hline \hline
        \end{tabular}
        & ~~
        &
        \begin{tabular}{c|c}\hline \hline
            parameters &  cutoff values\\ \hline
            $\lambda_\eta$ & $4.8 \times 10^{-2}$  \\
            $\hat{\lambda}_\eta$ & $1.7 \times 10^{-2}$  \\
            $\lambda_{\min}$ & $\approx10^{-18180}$ \\
            $\eta /r$ & $1.0 \times 10^{-4}$ \\
            \hline \hline
        \end{tabular}
        \\ 
        (a) The parameter setup & ~~ &
        (b) The threshold result
    \end{tabular}
        \caption{Parameters in the numerical analysis.}
    \label{tab:numerical-params}
\end{table}

\section{Conclusion}\label{sec:conclusion}
In this work, we propose a method for reducing the prohibitive cost of QSVT-based property estimation by removing the use of the worst-case assumptions characterized by the minimum non-zero eigenvalue or the rank of density matrices. 
By introducing the two-stage algorithm based on the spectral cutoff, our algorithm can significantly reduce the polynomial degree of QSVT and accordingly improve the total number of state copies. 
To realize the idea of the spectral cutoff, we develop the search algorithm for the spectral cutoff, which returns the cutoff with a certified interval with high probability.

For the application of the framework, we mainly focused on estimating von Neumann entropy.
In the task, our copy requirement matches the best-known complexity~\cite{wada2025state, Kean2025-list} without prior knowledge of $\lambda_{\min}$ or $r$, meaning that our technique is applicable to completely unknown quantum states or high-rank states, without any drawbacks in complexity. 

We leave several open questions for future investigation.
One promising direction is to expand the range of the target tasks. 
While we demonstrate two primary tasks, estimating other important properties, such as the R{\'e}nyi entropy with $\alpha < 1$ and the quantum relative entropy, remains a challenge. 
To deal with a new task within our framework, we need to derive a pair of $(\eta, P)$ such that the condition $P(\lambda_\eta) \le \eta$ can be efficiently solved by the search algorithm. 
Because our current search algorithm is limited to specific types of $P$, identifying solvable and sufficiently tight conditions for broader classes of properties is a crucial next step.
Furthermore, Algorithm\ref{alg-search} can also be naturally extended to the purified-query access model. Indeed, the purified query yields the block-encoding of $\rho$, and the block-encoding generates the Hamiltonian simulation, which is the input of Algorithm~\ref{alg-search}.
Establishing this extension and deriving the corresponding complexities is an important direction for future work.

Furthermore, our two-stage framework has a potential to be applied to the broader context beyond the estimation of nonlinear properties of quantum states. 
The search algorithm (Algorithm~\ref{alg-search}) can be extended to analyze the spectra of a wide class of Hamiltonians. 
For instance, quantum linear system solvers often suffer from a loose estimate of the minimum non-zero eigenvalues.
Our cutoff might improve their complexities via an effective state-dependent preconditioning.
Exploring how the two-stage framework improves the practical complexity of those problems is an important avenue for future research.

\begin{acknowledgments}
This work was supported by MEXT Quantum Leap Flagship Program Grants No. JPMXS0118067285 and No. JPMXS0120319794. 
J.K. was supported by SIP Grant Number JPJ012367.
A.T. was supported by JST SPRING, Japan Grant Number JPMJSP2123.
H.H. was supported by JSPS KAKENHI Grant Number JP26KJ1951.
K.W. was supported by JSPS KAKENHI Grant Number JP24KJ1963.
\end{acknowledgments}

\bibliography{bibtex}
\clearpage
\onecolumngrid
\setcounter{equation}{0}
\setcounter{section}{0}
\appendix

\section{Fourier series approximation of the Heaviside step function}
\label{appendix:fourier-heaviside}

We review the Fourier series approximation of the Heaviside step function $\Theta(x)$ proposed by Wan et al.~\cite{wan2022randomized}. 
The Fourier coefficients $F_{j,d}$ are explicitly given in the Supplementary Material of Ref.~\cite{wan2022randomized} as follows: 
\begin{gather}\label{eq:wan_A2}
    F(x) = \sum_{j=-d}^{d} F_{j, d} e^{ijx}, \notag\\
    F_{0,d} = \frac{1}{2}, ~~
    F_{j,d} = 0 \text{ for } j \text{ is even and }  j > 0 , \notag \\
    F_{j, d} = - i \sqrt{\frac{\beta}{2 \pi}} e^{-\beta} \frac{I_{(j-1)/2} (\beta) + I_{(j+1)/2}(\beta)}{j} \text{ for }  j \text{ is odd  and }  0< j < d , \notag\\
    F_{d,d} = - i \sqrt{\frac{\beta}{2 \pi}} e^{-\beta} \frac{I_{(d-1)/2} (\beta)}{d},\notag\\
    F_{j, d} = -F_{-j,d}\text{ for }  -d \le j < 0,
\end{gather}
where $d$ is a positive odd integer representing the degree of the approximation,
$\beta > 0$ is a free parameter, and $I_\nu(z)$ denotes the modified Bessel function of the first kind of order $\nu$.
$F(x)$ provides an approximation of the Heaviside step function $\Theta(x)$ for $\delta \in (0,\pi/2)$ and $\zeta > 0$ such that
\begin{gather}
    |\Theta(x) - F(x)| \leq \zeta, ~~~ \forall x\in [-\pi + \delta, -\delta] \cup [\delta, \pi-  \delta], \notag\\
    - \zeta \le F(x) \le 1 + \zeta, ~~~ \forall x\in[-\pi, \pi] \label{eq:heaviside-bound},
\end{gather}
if the parameters are determined as follows:
\begin{equation}\label{eq:def-d}
    d = g(\delta, \zeta) 
    = \begin{cases}
        2\lceil \sqrt{ w_{\zeta} f(\beta,\zeta_*) } \rceil +1& \zeta_* < 1 \\
        2\lceil \sqrt{ w_{\zeta} \beta } \rceil +1& \zeta_* \ge 1,
    \end{cases}
\end{equation}
where
\begin{gather}
    f(\beta,\zeta_*) = \frac{\log(\zeta_*^{-1})- \beta}{W_0\left( \frac{1}{e \beta} (\log(\zeta_*^{-1})- \beta) \right)},
    ~~
    w_{\zeta} = W_0\left(\frac{18}{\pi \zeta^2}\right),~~
   \zeta_* = \frac{2\sqrt{2 \pi w_\zeta} \zeta}{3}, \notag \\
    \beta = \beta(\delta, \zeta) = \max \left\{ \frac{1}{4 \sin^2(\delta)} W_0\left(\frac{3 }{\pi \zeta^2}\right), 1 \right\},
    \label{eq:def-f-beta}
\end{gather}
and $W_0(x)$ is the principal branch of the Lambert $W$ function.
The function $f(\beta, \zeta_*)$ has the properties that
$f(\beta, \zeta_*) > \beta$ for $\zeta_* < 1$ and 
$f(\beta, \zeta_*) = \mathcal{O}(\beta + \log(1/\zeta_*))$ from Proposition 9 in Ref.~\cite{wan2022randomized}.
From the behavior of $f(\beta, \zeta_*)$, the degree $d$ scales as $d = \mathcal{O}(\delta^{-1} \log(1/\zeta))$.

\section{Application: R{\'e}nyi entropy ($\alpha>1$)}\label{appendix:renyi}

In this section, we apply the proposed two-stage framework to R{\'e}nyi entropy $R_\alpha(\rho ) = \frac{1}{1-\alpha}\log(\tr [\rho^\alpha])$ estimation, in the case $\alpha > 1$. 
R{\'e}nyi entropy estimation also presents a challenge that differs from the von Neumann entropy estimation.
Conventionally, a quantum computer is used to estimate R{\'e}nyi moment $M_\alpha (\rho) = \Tr[\rho^{\alpha}]$, and then $R_\alpha(\rho)$ is classically computed by taking the logarithm. 
Note that to estimate $R_\alpha(\rho)$ in additive error $\varepsilon$, we should estimate $M_\alpha (\rho)$ in multiplicative error. 
That often requires lower and upper bounds of $\Tr[\rho^{\alpha}]$, while the worst-case bound is highly conservative.
In our two-stage framework, we obtain the lower bound of $\Tr[\rho^{\alpha}]$ in the identification stage in addition to the spectral cutoff. 
We leave the case of $(0< \alpha < 1)$ as a future work.

\subsection{Preparation}

First, in the preparation stage, we determine a bounding function $P_\alpha (\delta)$ and a target value of the error tolerance $\eta$.
We construct the polynomial $F(x)$ as a product of the step function and $x^{\alpha-1}$, similar to the method in Ref.~\cite{wang2024discrete}.
    From Ref.~\cite[Corollary 15]{gilyen2022-fidelity}, there exists an even real polynomial $H(x)$ of degree
    $\mathcal{O}\left(\delta^{-1} \log\left(1/\varepsilon_\mathrm{poly} \right)\right)$,
    which provides an approximation of the step function in positive $x$ such that
    \begin{gather}\label{eq:step-polynomial}
        H(x) \in [0, \varepsilon_\mathrm{poly}] ~~~\forall x \in {[0 , \delta/2]}, \notag \\
        H(x) \in [1-\varepsilon_\mathrm{poly}, 1] ~~~\forall x \in {[\delta, 1]}, \notag \\
        \left| H(x) \right| \le 1~~~\forall x\in[-1,1],
    \end{gather}
    where $\delta, \varepsilon_\mathrm{poly} >0$. 
    Similarly, from Ref.~\cite[Lemma 17]{gilyen2022-fidelity}, 
    there exists an even real polynomial $J(x)$ of degree
    $\mathcal{O}\left(\delta^{-1} \log\left(1/\varepsilon_\mathrm{poly} \right)\right)$,
    which provides an approximation of the function $x^{c}/ 2$ for $c > 0$ such that
    \begin{gather}
        \left|J(x) - \frac{1}{2}x^{c} \right| \leq \varepsilon_\mathrm{poly}~~~\forall x\in {[\delta/2, 1]}, \notag \\
        \left|J(x)\right|\le 1~~~\forall x\in[-1,1].
    \end{gather}
    By setting $c = \alpha - 1 > 0$, we have the even real polynomial $G(x) = H(x) J(x)$ of degree
    $D = \mathcal{O}\left(\delta^{-1} \log\left(1/\varepsilon_\mathrm{poly} \right)\right)$ such that
    \begin{gather}
        |G(x)| \le \varepsilon_\mathrm{poly} ~~~\forall x \in {[0 , \delta/(2\pi)]}, \notag \\
        |G(x)| \le \frac{1}{2}x^{\alpha-1} + \varepsilon_\mathrm{poly} ~~~\forall x \in {[\delta/(2\pi), \delta/\pi]}, \notag \\
        \left|G(x) - \frac{1}{2}x^{\alpha-1}\right| \le 2\varepsilon_\mathrm{poly} ~~~\forall x \in {[\delta/\pi, 1]}, \notag \\
        \left|G(x) \right| \le 1~~~\forall x\in[-1,1].
    \end{gather}
    We define $F(x) = \alpha_F G(x /\pi)$. This yields the approximation error bound as
    \begin{align}\label{eq:renyi-moment-error}
        \left| \tr[F(\rho) \rho ] - \tr[\rho^\alpha] \right|
        &\leq 2 \alpha_F \varepsilon_\mathrm{poly} + \delta^{\alpha - 1} C(\delta) + \tr[\rho^{\alpha}  \Pi_{\rho\le\delta}] \notag \\
        &\leq 2 \alpha_F\varepsilon_\mathrm{poly} + 2\delta^{\alpha - 1} C(\delta),
    \end{align}
    where $\alpha_F  = 2\pi^{\alpha - 1}$ and $\Pi_{\rho \le \delta}$ is a projector onto the subspace with eigenvalues $\lambda \leq \delta$.
    At the last inequality, we use the fact that $\tr[\rho^{\alpha}  \Pi_{\rho \le \delta}] \leq \delta^{\alpha -1} C(\delta)$.
    To maintain the error in Eq.~\eqref{eq:renyi-moment-error} within $\varepsilon'$, we should choose the parameters $\varepsilon_\mathrm{poly} = \varepsilon'/(4\alpha_F)$, and $\delta$ satisfying
    \begin{equation}\label{eq:renyi-p-tail}
        P_{\mathrm{tail}}(\delta) = 2\delta^{\alpha-1} C(\delta) \le \frac{\varepsilon'}{2}.
    \end{equation}
    
    Next, we confirm the error propagation to the overall entropy estimation.
    That is, we determine $\varepsilon'$ for a given target accuracy $\varepsilon$.
    Once we obtain the estimation of $\mathrm{tr}[F(\rho)\rho ]$ within additive error  $\varepsilon'(\leq\tr[\rho^\alpha]/2)$, the propagated error is bounded as
    \begin{align}
        & \left| \frac{1}{1-\alpha} \log (\tr[\rho^\alpha]) - \frac{1}{1-\alpha} \log (\tr [F(\rho) \rho ]) \right| \notag \\
        & ~~~ \le \frac{1}{\alpha - 1} \left| \log \left( 1 - \frac{\varepsilon'}{\tr[\rho^\alpha]} \right) \right| 
        \leq  \frac{2\varepsilon'}{(\alpha - 1)\tr[\rho^\alpha]},
    \end{align}
    where we use $-\log(1-x) \le 2x$ for $x \in [0, 1/2]$ at the last inequality.
    
    To bound the denominator, we need a lower bound of $\tr[\rho^\alpha]$.
    We observe that
    $\tr[\rho^\alpha] \ge \lambda^{\alpha-1}\tr[\rho (\bm{1}-\Pi_{\lambda}) ] = \lambda^{\alpha-1} (1-C(\lambda))$, for any eigenvalue $\lambda \in [0, 1]$.
    Particularly, when we set $\lambda$ as a certified $(1/2, C)$-spectral cutoff $\hat{\lambda}_{1/2, C}$ such that $ C(\hat{\lambda}_{1/2, C}) \le 1/2$,
    $\tr[\rho^\alpha] \geq \hat{\lambda}_{1/2, C}^{\alpha -1}/2$ is obtained.
    Thus, the bias of the entropy $\frac{1}{1-\alpha} \log(\tr[\rho^\alpha])$ is bounded by 
    additive error $\frac{4\varepsilon'}{\hat{\lambda}^{\alpha-1}_{1/2, C}(\alpha-1)}$.

    Thus, we choose $\varepsilon'$ as 
    \begin{equation}
            \varepsilon'
            = \frac{1}{4} \hat{\lambda}^{\alpha-1}_{1/2, C}\min\{(\alpha-1)\varepsilon,1\}
            = \frac{1}{4} \hat{\lambda}^{\alpha-1}_{1/2, C}(\alpha-1)\varepsilon,
    \end{equation}
    where we assume $\varepsilon \le \frac{1}{\alpha-1}$.
    Therefore, with Eq.~\eqref{eq:renyi-p-tail}, we choose $P_\alpha$ and $\eta$ for this task as
    \begin{equation}\label{eq:def-p-alpha}
        P_\alpha(\delta) = \left(\frac{\delta}{\hat{\lambda}_{1/2, C}}\right)^{\alpha -1} C(\delta) \le \frac{\alpha-1}{16}\varepsilon
         = \eta.
    \end{equation}
    Using the $(\eta, P_\alpha)$-spectral cutoff $\hat{\lambda}_\eta$, we can bound the tail-induced error within $\varepsilon$.
    The polynomial degree $D$ is obtained by $D = \mathcal{O}\left(\delta^{-1} \log\left(1/\varepsilon_\mathrm{poly} \right)\right) 
    =\mathcal{O}\left(\hat{\lambda}_{\eta}^{-1} \log\left(\hat{\lambda}_{1/2, C}^{-1}\varepsilon^{-1}\right)\right)=\mathcal{O}(\hat{\lambda}_{\eta}^{-1})$.
    
\subsection{Identification}
Here we introduce a search algorithm for the $(\eta, P_\alpha)$-spectral cutoff.

\begin{lemma}[Search algorithm for $(\eta, P_\alpha)$-spectral cutoff]\label{lemma:renyi-search}
    Let $\rho$ be an unknown but accessible quantum state.
    Let $H$ be a Hamiltonian s.t. $\|H\| \leq 1$, and we have access to controlled Hamiltonian evolution $e^{-i H}$.
    Let $\eta \in (0,1/2]$ be a target value and $\vartheta \in (0,1)$ be a failure probability. 
    For $\alpha>1$, define a bounding function $P_\alpha(x)$ as 
    \begin{equation}
        P_\alpha(x)=\left(\frac{x}{\hat{\lambda}_{1/2, C}}\right)^{\alpha-1} C(x).
    \end{equation}
    Let $\hat{\lambda}_{1/2, C}$ be a certified $(1/2, C)$-spectral cutoff.
    Then, there exists a quantum algorithm that outputs a certified $\hat{\lambda}_\eta$ with a success probability at least $1-\vartheta$,
    such that 
    $P(\hat{\lambda}_\eta)  \le \eta$ and 
    $\hat{\lambda}_\eta \le \hat{\lambda}_{1/2,C}$.
    This algorithm requires $\tilde{\mathcal{O}}\left(\hat{\lambda}_{\eta}^{-1} \eta^{-2}\log(\vartheta^{-1})\right)$ simulation time.
    
    If $H= \rho$, 
    this algorithm requires $\tilde{\mathcal{O}}\left(\hat{\lambda}_{\eta}^{-2} \eta^{-2}\log(\vartheta^{-1})\right)$ total copies.
    Moreover, $\hat{\lambda}_\eta$ satisfies
    \begin{equation}
        \hat{\lambda}_\eta = \Omega(\eta^{1/\alpha}/r).
    \end{equation}
\end{lemma}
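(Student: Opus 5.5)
The plan is to reduce the search for an $(\eta,P_\alpha)$-cutoff to a modification of Algorithm~\ref{alg-search}. The first step is to obtain $\hat{\lambda}_{1/2,C}$ by running Algorithm~\ref{alg-search} itself with target value $1/2$ and failure probability $\vartheta/2$. Theorem~\ref{thm:search} certifies $C(\hat{\lambda}_{1/2,C})\le 1/2$, and the cost is only $\tilde{\mathcal{O}}(\hat{\lambda}_{1/2,C}^{-1})$ simulation time. Since we will enforce $\hat{\lambda}_\eta\le\hat{\lambda}_{1/2,C}$, this cost is dominated by the later one.

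The second step is the main search, run with failure probability $\vartheta/2$. The dyadic refinement $\delta_l=2^{-l}$, the degree choice $d_l=g(\delta_l,\zeta)$, the sample sizes $N_{d_l}$ and the union-bound schedule $\vartheta_l$ are kept exactly as in Algorithm~\ref{alg-search}, with $\xi=\zeta=\eta/8$. Only the selection rule changes. On the grid $\chi_{\delta_l}\cap[0,\hat{\lambda}_{1/2,C}]$ we take
\[
\lambda_*=\max\Bigl\{x:\ \hat{C}_{d_l}(x+\delta_l)+\zeta+\xi\le \eta\,(\hat{\lambda}_{1/2,C}/x)^{\alpha-1}\Bigr\},
\]
and return $2\delta_l$ once $\lambda_*\ge 2\delta_l$.

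Certification follows from Eq.~\eqref{eq:cdf-approximation-bound}: at any accepted grid point $x$,
\[
P_\alpha(x)=(x/\hat{\lambda}_{1/2,C})^{\alpha-1}C(x)\le (x/\hat{\lambda}_{1/2,C})^{\alpha-1}\bigl(\hat{C}+\zeta+\xi\bigr)\le\eta .
\]
The returned value $2\delta_l$ is at most $\lambda_*$. Both factors of $P_\alpha$ are monotone nondecreasing in $x$, so $P_\alpha(2\delta_l)\le P_\alpha(\lambda_*)\le\eta$. This monotonicity is exactly what makes returning $2\delta_l$ legitimate. The bound $\hat{\lambda}_\eta\le\hat{\lambda}_{1/2,C}$ is built in by the grid restriction.

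The complexity count is identical to the proof of Theorem~\ref{thm:search}. Per level, the total simulation time is $N_{d_l}\mathbb{E}|J|=\tilde{\mathcal{O}}(\delta_l^{-1}\eta^{-2}\log\vartheta^{-1})$, and the geometric sum is dominated by the final level, where $\delta_L=\hat{\lambda}_\eta/2$. For $H=\rho$, virtual DME converts this into $N_{d_l}\mathbb{E}|J|^2$ copies, which gives $\tilde{\mathcal{O}}(\hat{\lambda}_\eta^{-2}\eta^{-2}\log\vartheta^{-1})$ total.

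The main obstacle is the termination and lower bound $\hat{\lambda}_\eta=\Omega(\eta^{1/\alpha}/r)$, which I would handle by mimicking Proposition~\ref{prop:termination}. Take a dyadic $\delta_{\rm test}$ with $4\delta_{\rm test}\le x_0$, where $x_0$ is a point satisfying $C(x_0)\le\eta/2$ and hence $(\hat{C}+\zeta+\xi)\le 3\eta/4$. Since $x\le\hat{\lambda}_{1/2,C}$, the factor $(\hat{\lambda}_{1/2,C}/x)^{\alpha-1}$ is at least $1$, so the test point $x=2\delta_{\rm test}$ is accepted. This shows $\hat{\lambda}_\eta\ge\min\{\lambda_{\eta/2}/4,\,\Theta(\hat{\lambda}_{1/2,C})\}$. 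The cap at $\hat{\lambda}_{1/2,C}$ is harmless because $\hat{\lambda}_{1/2,C}\ge\lambda_{1/4}/4\ge 1/(16r)$ by Proposition~\ref{prop:termination}.

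For the $r$-dependent bound I would exploit the weight factor rather than just $C\le\eta$. Using $C(x)\le rx$, it suffices that $(x/\hat{\lambda}_{1/2,C})^{\alpha-1}\cdot rx\lesssim\eta$. Combined with $\hat{\lambda}_{1/2,C}\gtrsim 1/r$, this suggests points $x\sim\eta^{1/\alpha}/r$ are accepted. Making the constants consistent is where care is needed: the statistical slack $\zeta+\xi$ must also be multiplied by the weight, so I would tighten $\xi,\zeta$ accordingly, or run the test at the scale where the weight is $O(1)$. I expect this step to be the delicate one.
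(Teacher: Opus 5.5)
Your proposal is correct and follows essentially the paper's proof. Like the paper, you obtain $\hat\lambda_{1/2,C}$ with Algorithm~\ref{alg-search}, rerun the dyadic search on $[0,\hat\lambda_{1/2,C}]$ with the CDF estimate reweighted by $(x/\hat\lambda_{1/2,C})^{\alpha-1}$, reuse the cost accounting of Theorem~\ref{thm:search}, and get the rank bound from $C(x)\le rx$ together with $\hat\lambda_{1/2,C}\ge 1/(16r)$; putting the slack inside the weight is a harmless, slightly cleaner variant of the paper's rule $\hat P_d(x+\delta)\le\eta-\zeta-\xi$.

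The step you flag as delicate needs no retuning of $\xi,\zeta$. The weight is at most $1$ on the restricted grid, and at $x=2\delta$ you have $\hat C(x+\delta)+\zeta+\xi\le 2rx+\eta/2$, so your acceptance condition holds once $(16rx)^{\alpha-1}\,2rx\le\eta/2$. That is, it holds for any dyadic $x\le\min\{(4\eta)^{1/\alpha},1\}/(16r)$, which also keeps $x\le\hat\lambda_{1/2,C}$. This is the same computation as the paper's $\tfrac{1}{12}(24x_{\rm test}r)^\alpha\le\eta/2$.
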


\begin{proof}
First, we run Algorithm~\ref{alg-search} with target value $1/2$ and obtain a certified $(1/2 ,C)$-spectral cutoff $\hat{\lambda}_{1/2, C}$.
Then, using the determined $\hat{\lambda}_{1/2, C}$, 
we define an estimator $\hat{P}_d(x) = (x/\hat{\lambda}_{1/2, C})^{\alpha-1} \hat{C}_d(x)$
which can be constructed from $\hat{C}_d(x)$ by classical post-processing.
We allocate failure probability $\vartheta/2$ to the search for $\hat{\lambda}_{1/2, C}$ and $\vartheta/2$ to the subsequent search for $\hat{\lambda}_\eta$.
In addition, we restrict classical search to the interval $[0, \hat{\lambda}_{1/2, C}]$.
In this search region, the prefactor $(x/\hat{\lambda}_{1/2,C})^{\alpha-1}$ is bounded by $1$.
Thus, the error bound is given by
\begin{align}
    \hat{P}_d(x)  \geq \left(\frac{x}{\hat{\lambda}_{1/2, C}}\right)^{\alpha-1} \big(C(x - \delta) - \zeta - \xi\big) \geq P(x - \delta) - \zeta - \xi.
\end{align}
By following Algorithm~\ref{alg-search}, we construct $\hat{P}_d(x)$ from $\hat{C}_d(x)$ and find
$\lambda_* =\underset{x\in \chi_{\delta_l}:\hat{P}_{d_l}(x + \delta_l) \leq \eta - \zeta - \xi }{\max} x$.
When $\lambda_* \ge 2\delta$, we obtain $\hat{\lambda}_{\eta}=2\delta$.
This algorithm has the same scaling of the simulation time as Theorem~\ref{thm:search},
$\tilde{\mathcal{O}}\!\left(\hat{\lambda}_{\eta}^{-1} \eta^{-2}\log(\vartheta^{-1})\right)$.
Since $\eta \le 1/2$ and $\hat{\lambda}_{\eta} \le \hat{\lambda}_{1/2,C}$, 
the simulation time for $\hat{\lambda}_\eta$ dominates the total simulation time.
In the case of $H= \rho$, the copy requirement is similarly obtained as $\tilde{\mathcal{O}}\!\left(\hat{\lambda}_{\eta}^{-2} \eta^{-2}\log(\vartheta^{-1})\right)$.

Next, we confirm the termination of the algorithm and derive the lower bound in the case of $H= \rho$.
Consider a resolution $\delta_{\mathrm{test}} = 2^{-l}$ and a test point $x_{\mathrm{test}} = 2\delta_{\mathrm{test}}$.
Similarly to Proposition~\ref{prop:termination},
we can bound $\hat{P}_d(x_{\mathrm{test}}+\delta_{\mathrm{test}})$ using the relation
\begin{align}
    \hat{P}_d(x) \leq \left(\frac{x}{\hat{\lambda}_{1/2, C}}\right)^{\alpha-1} \big(C(x + \delta) + \zeta + \xi\big) \leq P(x + \delta) + \zeta + \xi.
\end{align}
Thus, 
\begin{align}
    \hat{P}_d(x_{\mathrm{test}}+\delta_{\mathrm{test}}) &\leq \left(\frac{x_{\mathrm{test}}+\delta_{\mathrm{test}}}{\lambda_{1/4,C}/4}\right)^{\alpha-1} C(x_{\mathrm{test}}+2\delta_{\mathrm{test}}) + \zeta + \xi \notag \\
    &\le \left(24 x_{\mathrm{test}}r \right)^{\alpha-1} \left(2 x_{\mathrm{test}} r\right) + \zeta + \xi \notag \\
    &=\frac{1}{12}\left(24 x_{\mathrm{test}}r\right)^{\alpha}  + \zeta + \xi,
\end{align}
where we use $C(x) \le rx$ at the second inequality. From the termination condition 
$\hat{P}_d(x_{\mathrm{test}}+\delta_{\mathrm{test}}) \le \eta - \zeta - \xi$ and $\zeta + \xi \le \eta /4$,
there exists a valid candidate $x_{\mathrm{test}}$ satisfying
\begin{align}
    x_{\mathrm{test}} \le \min \left\{ \frac{(6 \eta)^{1/\alpha}}{24r} , \frac{1}{24r}\right\}.
\end{align}
The second condition ensures $x_{\mathrm{test}} + \delta_{\mathrm{test}} \le \frac{1}{16r}   \le \frac{\lambda_{1/4,C}}{4} \le \hat{\lambda}_{1/2,C}$.
Hence, we obtain $\hat{\lambda}_\eta =2 \delta_{\mathrm{test}} =  \Omega(\eta^{1/\alpha}/r)$.
\end{proof}

\subsection{Estimation and total copy requirement}
Combining with the previous discussion, we finally obtain the complexity for estimating R{\'e}nyi entropy.
\begin{lemma}[R{\'e}nyi entropy $(\alpha > 1)$ estimation]\label{lemma:Renyi-poly}
    Let $\rho$ be an $N$-dimensional quantum state.
    For a fixed $\alpha>1$, $\varepsilon \in (0, 1/(\alpha-1)]$ and $\vartheta \in (0,1)$,
    R{\'e}nyi entropy $\frac{1}{1-\alpha}\log (\mathrm{tr}[\rho^\alpha]) $ can be estimated within additive error $\varepsilon$ with a success probability at least $1-\vartheta$,
    using 
    $\tilde{\mathcal{O}} (\hat{\lambda}_{\eta}^{-2} \hat{\lambda}_{1/2, C}^{2-2\alpha}\varepsilon^{-2} \log(\vartheta^{-1})$ copies of $\rho$,
    where $\hat{\lambda}_\eta$ is a certified $(\eta, P_{\alpha})$-spectral cutoff defined by
\begin{equation}\label{eq:renyi-p-def}
    P_{\alpha}(\hat{\lambda}_\eta) = \left(\frac{\hat{\lambda}_\eta}{\hat{\lambda}_{1/2, C}}\right)^{\alpha-1}C(\hat{\lambda}_\eta) \le \eta
\end{equation}
The target value $\eta$ is given by $\eta = \Omega(\varepsilon)$,
and $\hat{\lambda}_{1/2,C}$ denotes a certified $(1/2, C)$-spectral cutoff.
\end{lemma}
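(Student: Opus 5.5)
The plan mirrors the proof of Lemma~\ref{lemma:vn}, with the Rényi-specific preparation and Lemma~\ref{lemma:renyi-search} inserted.

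\textbf{Stage~1.} Run Lemma~\ref{lemma:renyi-search} with failure probability $\vartheta/2$ and target $\eta=\frac{\alpha-1}{16}\varepsilon$ from Eq.~\eqref{eq:def-p-alpha}; since $\alpha$ is fixed, $\eta=\Omega(\varepsilon)$.
\begin{itemize}
\item Its internal run of Algorithm~\ref{alg-search} with target $1/2$ returns a certified $\hat{\lambda}_{1/2,C}$ using $\tilde{\mathcal{O}}(\hat{\lambda}_{1/2,C}^{-2})$ copies (Theorem~\ref{thm:search}).
\item It then returns a certified $\hat{\lambda}_\eta\le\hat{\lambda}_{1/2,C}$ with $P_\alpha(\hat{\lambda}_\eta)\le\eta$, using $\tilde{\mathcal{O}}(\hat{\lambda}_\eta^{-2}\eta^{-2}\log(\vartheta^{-1}))=\tilde{\mathcal{O}}(\hat{\lambda}_\eta^{-2}\varepsilon^{-2}\log(\vartheta^{-1}))$ copies.
\end{itemize}
Because $\hat{\lambda}_{1/2,C}\le 1$ and $\alpha>1$, we have $\hat{\lambda}_{1/2,C}^{2-2\alpha}\ge 1$, so this cost is dominated by the claimed bound.

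\textbf{Stage~2.} Set $\delta=\hat{\lambda}_\eta$ and $\varepsilon'=\frac14\hat{\lambda}^{\alpha-1}_{1/2,C}(\alpha-1)\varepsilon$.
\begin{itemize}
\item Build $G=HJ$ as in the preparation subsection, with $\varepsilon_{\rm poly}=\varepsilon'/(4\alpha_F)$ and $\alpha_F=2\pi^{\alpha-1}=\mathcal{O}(1)$.
\item The degree is $D=\mathcal{O}(\hat{\lambda}_\eta^{-1}\log(1/\varepsilon_{\rm poly}))=\tilde{\mathcal{O}}(\hat{\lambda}_\eta^{-1})$.
\item By Eq.~\eqref{eq:renyi-moment-error}, the bias $|\tr[F(\rho)\rho]-\tr[\rho^\alpha]|$ is at most $\varepsilon'/2+2\hat{\lambda}_\eta^{\alpha-1}C(\hat{\lambda}_\eta)$.
\item Using $P_\alpha(\hat{\lambda}_\eta)\le\eta$, the tail term is at most $2\hat{\lambda}_{1/2,C}^{\alpha-1}\eta=\varepsilon'/2$. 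So the bias is at most $\varepsilon'$.
\item Apply Lemma~\ref{lemma:virtual-qsvt} with additive error $\varepsilon'/2$ and failure probability $\vartheta/2$. This estimates $\tr[F(\rho)\rho]$ with cost $\tilde{\mathcal{O}}(D^2\alpha_F^2\varepsilon'^{-2}\log(\vartheta^{-1}))=\tilde{\mathcal{O}}(\hat{\lambda}_\eta^{-2}\hat{\lambda}_{1/2,C}^{2-2\alpha}\varepsilon^{-2}\log(\vartheta^{-1}))$, treating $\alpha$ as a constant.
\end{itemize}
The total error of the resulting estimate $\hat{M}$ of $\tr[\rho^\alpha]$ is then at most $2\varepsilon'$ after adjusting constants, e.g.\ halving $\varepsilon'$.

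\textbf{Error propagation and the main obstacle.} The step needing care is passing from $\hat{M}$ to the entropy.
\begin{itemize}
\item The certified cutoff gives $\tr[\rho^\alpha]\ge\hat{\lambda}_{1/2,C}^{\alpha-1}/2$. Hence the total error is at most $\tr[\rho^\alpha]/2$, using $\varepsilon\le 1/(\alpha-1)$, which makes the relative error at most $1/2$.
\item The inequality $|\log(1\pm x)|\le 2x$ on $[0,1/2]$ then bounds the additive entropy error by $\frac{2\cdot(\text{error})}{(\alpha-1)\tr[\rho^\alpha]}\le\frac{4\varepsilon'}{(\alpha-1)\hat{\lambda}^{\alpha-1}_{1/2,C}}=\varepsilon$. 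The preparation stated only the one-sided bound $-\log(1-x)\le 2x$, so the sign-$+$ case needs this two-sided version.
\item Check that the factor-2 bookkeeping between bias and statistical error still gives $\varepsilon$, rescaling $\varepsilon'$ by a constant if needed.
\item One should also verify that $\log(1/\varepsilon_{\rm poly})$ contributes only polylogarithmic factors, so it is absorbed into $\tilde{\mathcal{O}}$.
\end{itemize}
A union bound over the two stages gives success probability at least $1-\vartheta$.
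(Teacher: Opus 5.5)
Your proposal is correct and follows essentially the same route as the paper's proof. Stage~1 runs Lemma~\ref{lemma:renyi-search} with $\eta=\frac{\alpha-1}{16}\varepsilon$. Stage~2 applies Lemma~\ref{lemma:virtual-qsvt} at accuracy of order $\varepsilon'=\frac14\hat{\lambda}_{1/2,C}^{\alpha-1}(\alpha-1)\varepsilon$, which produces the $\hat{\lambda}_{1/2,C}^{2-2\alpha}\varepsilon^{-2}$ shot factor. Your additional bookkeeping (the two-sided bound on $|\log(1\pm x)|$, the $\vartheta/2$ union bound, and the constant rescaling between bias and statistical error) makes explicit what the paper leaves at the level of ``target error $\mathcal{O}(\varepsilon')$''.
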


\begin{proof}
    In the preparation stage, we have determined the parameters $\eta = \tilde{\mathcal{O}
    }(\varepsilon)$ and $P_\alpha$. 
    In Stage~1, by Lemma~\ref{lemma:renyi-search}, we find a certified $(\eta, P_\alpha)$-spectral cutoff $\hat{\lambda}_{\eta}$ using $\tilde{\mathcal{O}} (\lambda_{\eta}^{-2} \eta^{-2} \log (\vartheta^{-1})) = \tilde{\mathcal{O}} (\lambda_{\eta}^{-2} \varepsilon^{-2}\log (\vartheta^{-1}))$ copies.
    In Stage~2, we determine the polynomial $F(x) = \alpha_F G(x/\pi)$ with the degree 
    $D = \tilde{\mathcal{O}}(\hat{\lambda}_{\eta}^{-1})$ from the identified cutoff, where in the above we found $\alpha_F = 2 \pi^{\alpha-1}$.
    Now we apply Lemma~\ref{lemma:virtual-qsvt} to estimate $\tr[ F(\rho) \rho]$ that approximates $\tr[\rho^\alpha]$ with the target error $\mathcal{O}(\varepsilon')$.
    As a result, we obtain the copy requirements for this estimation: $\tilde{\mathcal{O}}(D^2) = \tilde{\mathcal{O}}(\hat{\lambda}_\eta^{-2})$ copies per circuit,
    $\tilde{\mathcal{O}}(\alpha_F^2 \varepsilon'^{-2}\log(\vartheta^{-1}))=
    \tilde{\mathcal{O}}(\varepsilon^{-2} \hat{\lambda}_{1/2,C}^{-2\alpha+2}\log(\vartheta^{-1}))$ measurement shots,
    and consequently $\tilde{\mathcal{O}}(\hat{\lambda}_\eta^{-2}\hat{\lambda}_{1/2,C}^{-2\alpha+2}\varepsilon^{-2}\log(\vartheta^{-1}))$ copies for Stage~2. Hence, the total copy requirement of Stage~1 and Stage~2 is  $\tilde{\mathcal{O}}(\hat{\lambda}_\eta^{-2}\hat{\lambda}_{1/2,C}^{-2\alpha+2}\varepsilon^{-2}\log(\vartheta^{-1}))$.
\end{proof}

To compare the copy requirements with previous works, we show the rank-based worst-case bound.
We observe that $\hat{\lambda}_\eta =\Omega(\varepsilon^{1/\alpha}/r)$ by Lemma~\ref{lemma:renyi-search} and $\eta\sim\varepsilon$. Also, $\hat{\lambda}_{1/2, C} = \Omega(1/r)$ by Proposition~\ref{prop:termination}.
Then, the estimation requires $\hat{\mathcal{O}}(\varepsilon^{-2-2/\alpha} r^{2\alpha})$ copies of $\rho$ in the worst case.
The copy requirements are summarized in Table~\ref{tab:renyi-complexity}.

\begin{table}[htbp]
    \centering
    \renewcommand{\arraystretch}{1.5}

    \begin{tabular}{ccc|c}\hline \hline
        Identification  & Estimation & Total & ~~~ Total (worst case)\\
        \hline
        $\tilde{\mathcal{O}}\!\left(\hat{\lambda}_{\eta}^{-2} \varepsilon^{-2}\right)$ ~~~ & $\tilde{\mathcal{O}}\!\left(\hat{\lambda}_{\eta}^{-2} \hat{\lambda}_{1/2,C}^{-2\alpha+2}\varepsilon^{-2}\right)$ ~~~ &
        $\tilde{\mathcal{O}}\!\left(\hat{\lambda}_\eta^{-2}\hat{\lambda}_{1/2,C}^{-2\alpha+2}\varepsilon^{-2}\right)$ ~~~ &
        ~~~ $\tilde{\mathcal{O}}(\varepsilon^{-2-2/\alpha} r^{2\alpha})$\\
        \hline \hline
    \end{tabular}
    \renewcommand{\arraystretch}{1}
    \caption{
    The necessary number of copies of $\rho$ for estimating R{\'e}nyi entropy for a fixed $\alpha$ within an additive error $\varepsilon$ in high probability; 
    the copies are consumed in the identification stage (Stage 1) and the estimation stage (Stage 2). 
    $\hat{\lambda}_{\eta}$ is a certified cutoff satisfying $P_\alpha(\hat{\lambda}_{\eta}) \le \eta$ with $\eta$ the target error tolerance value, and $\hat{\lambda}_{1/2, C}$ is a certified $(1/2, C)$-spectral cutoff.
    The worst-case copy requirements are described in terms of the rank $r$.
    $\tilde{\mathcal{O}}(\cdot)$ and $\tilde{\Omega}(\cdot)$ hide logarithmic factors of $\varepsilon$, $N$, $\hat{\lambda}_{\eta}$, $\hat{\lambda}_{1/2, C}$.}
    \label{tab:renyi-complexity}
\end{table}

For quantum R{\'e}nyi entropy in the sample access model, several algorithms have been proposed. 
Recent work~\cite{Kean2025-list} derives an upper bound of $\tilde{\mathcal{O}}(\varepsilon^{-2-2\alpha}r^{2})$ via the sample-to-query lifting technique. 
This scaling is slightly better than ours ($r^{2}$ vs $r^{2\alpha}$). 
They applied the lifting theorem to Ref.~\cite{wang2024discrete}, which estimates R{\'e}nyi entropy in the purified-query access model.
Compared with Ref.~\cite{wang2024discrete},
the polynomial degree $D$ of our algorithm recovers the scaling of Ref.~\cite{wang2024discrete}.
The cost difference comes from the estimation stage.
They introduce a sophisticated estimation protocol involving the variable time amplitude estimation. 
In contrast, we employ the simple estimation protocol via the virtual DME. 
However, we note that our search algorithm is independent of the following estimation protocol used, meaning that it can be combined with other estimation protocols.
Such combination may result in further cost reductions.
\end{document}